%% file: sublinear-hypergraph.tex
\documentclass[11pt]{article}
\input{preamble.tex}

\usepackage{parskip}

\newtheorem{result}{Result}

\newcommand{\davg}{d_{\mathsf{avg}}}
\newcommand{\eavg}{\mu}

\newcommand{\RV}{\mathbf{RV}}
\newcommand{\DEG}{\mathbf{Deg}}
\newcommand{\IncE}{\mathbf{IncEdg}}

\newcommand{\RE}{\mathbf{RE}}
\newcommand{\SIZE}{\mathbf{Size}}
\newcommand{\IncV}{\mathbf{IncVert}}

\newcommand{\est}{\mathsf{est}}

\newcommand{\estavg}{{\tt EstAvgDeg}}
\newcommand{\estH}{{\tt EstH}}
\newcommand{\estmu}{{\tt EstArity}}
\newcommand{\estHyp}{{\tt EstNumHyperEdge}}

\title{Sublinear Algorithms for Estimating the 
	Number of \\ Hyperedges in Arbitrary Hypergraphs\footnote{Preliminary results of this paper were reported in the second author's undergraduate thesis~\cite{LaPorte26}}
	}

\author{Deeparnab Chakrabarty\thanks{Supported by NSF awards CCF-2041920, 2402571.} \\
Dartmouth\\
{\small \tt deeparnab@dartmouth.edu}
\and Cooper LaPorte\\
Dartmouth\\
{\small \tt cooper.h.laporte.26@dartmouth.edu}
\and
C. Seshadhri\thanks{Supported by NSF DMS-2023495, CCF-1740850, 2402572.} \\
University of California, Santa Cruz\\
{\small \tt sesh@ucsc.edu}
}
\date{}
\begin{document}
\maketitle

\begin{abstract}
We study the problem of estimating the number of hyperedges in an arbitrary $n$-vertex hypergraph 
using sublinear in $n$ queries. Note that the number of hyperedges, $m$, can be exponential in $n$. 
For $k$-uniform hypergraphs, estimating $m$ is equivalent to estimating the average vertex degree, a problem studied in Barhum's Master's thesis (Weizmann Inst., 2007) under the standard access model of sampling random vertices, querying vertex degrees, and accessing incident hyperedges. Barhum's techniques do not extend to arbitrary hypergraphs, and simple lower-bound examples show that the standard access model cannot yield strongly sublinear algorithms when hyperedges have unbounded size.

To obtain non-trivial sublinear bounds, we consider a natural generalization of the access
model called the \emph{dual access model}, which allows sampling (labels of) random hyperedges, querying edge sizes, and 
accessing vertices in a hyperedge.
In this model, we give a randomized algorithm that returns a $(1+\varepsilon)$-approximation to $m$ with high probability, making $O(\varepsilon^{-2}\sqrt{n} + \sqrt{n}\log n)$ 
queries. Complementing our algorithm,
we prove a nearly matching lower bound showing that $\Omega(\sqrt{n})$ queries
are necessary for any algorithm that obtains a constant factor approximation to $m$.

\end{abstract}

\thispagestyle{empty}
\newpage
\setcounter{page}{1}
\section{Introduction}

The study of sublinear graph algorithms has been a fertile research area over the past two decades. 
The standard ``adjacency list" query model to access an input graph $G$, arising from the works of Parnas and Ron~\cite{PaRo02}, and Goldreich and Ron~\cite{GR08}, 
is the following: one may (i) sample uniform at random (uar) vertices
in $G$, (ii) perform degree queries, that return the degree of a vertex $v$, and (iii) perform neighbor queries,
that return a uar edge incident to a vertex $v$. 
The most classic (and basic)
algorithmic problem is to estimate the number of edges, $m$, of a graph in sublinear time. 
The first results were by Feige~\cite{Fe06} and by Goldreich and Ron~\cite{GR08}; the latter
give the first sublinear query $(1+\eps)$-approximation algorithm.
Eden, Ron, and Seshadhri~\cite{ERS19} (see also~\cite{BCS26,EdSe26}) gave alternate elegant
$(1+\eps)$-approximation algorithms to the number of edges in $O(\eps^{-2}\sqrt{n/\davg})$ queries,
where $n$ is the number of vertices and $\davg$ is the average degree of $G$.
The problem of estimating $m$ 
has been studied extensively under richer and alternate query models~\cite{EdRo18,EdNaTe22,EdRoRo19,TT22,BT24,BCS26,Ch26}.

Sublinear algorithms for arbitrary graphs have impact beyond theory, and a number
of practical algorithms for analyzing large real-world graphs have used these techniques~\cite{LoBaGo+14,EdJa+18,BeSe20,WaWe23,Wa24}.
Recently, many works in network science and data mining have focused on 
real-world \emph{hypergraphs}~\cite{VeBeKl20,VeBeKl22,AnCoPo+23,CaDeFa+23} (the recent survey~\cite{LeBuEl+25} has \emph{hundreds} of papers on this topic).
A hypergraph $H = (V, E)$ is an arbitrary set system, where each $e \in E$ is a subset of $V$ instead of being a pair.
A common example of a hypergraph is a coauthor network, where the vertex set consists of researchers
and a hyperedge is a subset of coauthors on a publication.
Indeed, many real-world graph datasets are derived from hypergraphs by simply ``collapsing" hyperedges into cliques.
However, hypergraphs are a better model of the original data source, and
 practitioners suggest they should be analyzed directly. This has spurred  practical and theoretical work on hypergraph algorithms~\cite{HwTaTi+08,YuTaWa12,BeAbSc+18,HuQiSh+10,AmVeBe20}
(consider the recent software HyperNetworkX~\cite{hypnx}).

In this paper, we study sublinear algorithms for hypergraphs, and in particular, focus on the classic problem of estimating $m$,
the number of hyperedges. For $k$-uniform hypergraphs\footnote{A hypergraph is $k$-uniform if all hyperedges have exactly $k$ vertices. A standard graph is $2$-uniform.}, estimating $m$ is equivalent to estimating the average degree $\davg$. 
This latter problem was studied by Barhum~\cite{Ba07} in the standard access model described above.
There is a simple lower bound of $\Omega(n^{1-1/k})$ in this model: suppose the hypergraph
has a set of $c n^{1/k}$ vertices ($c$ is a constant) where all subsets of size $k$ form hyperedges,
and the remaining vertices are connected by a $k$-uniform matching. Any constant factor approximation for $m$
requires detecting the set of $\Theta(n^{1/k})$ vertices, and this requires $\Omega(n^{1-1/k})$ random
vertices. Barhum gave a nearly matching upper bound for estimating the average degree in the \emph{bounded arity} case, where all hyperedges have at most $k$ vertices~\cite{Ba07}.

Although bounded arity hypergraphs are an important subclass of hypergraphs, this is not a valid assumption for many real-world hypergraphs.
Our focus is on arbitrary hypergraphs with no assumptions, and given the lower bound mentioned above, 
the problem of estimating $m$ becomes more interesting in this scenario. Note that $m$ can be as large as $\Omega(2^n)$.
Our driving question is: \begin{quote}\emph{For some natural query model for accessing hypergraphs, can we get sublinear, in $n$, algorithms for estimating $m$, the number of hyperedges?} 
\end{quote}
We strive for {\em simple} algorithms that could be implemented in practice, and whose complexity involve as few 
polylogarithmic factors that may arise from ideas such as bucketing.

\paragraph{Our Model and Results.}
Hyperedges and vertices have a natural duality captured by the following  bipartite hyperedge
incidence graph. Consider all elements in $V(H)$ and $E(H)$ represented
by unique IDs. Form the bipartite graph $G = (V(H) \cup E(H), F)$, where the ``left" side
has nodes for $V(H)$, and the ``right" side has nodes for $E(H)$. There is an edge $(v,e)$
for $v \in V(H)$ and $e \in E(H)$ if $v \in e$. We consider the standard graph access model on this  bipartite incidence graph, 
with the ability to sample uniformly at random from each side separately.
In terms of access to the hypergraph, this setup provides the following additional queries:
(iv) sample uar hyperedges (v) arity queries, that given (the ID of) hyperedge $e$, returns the arity $|e|$,
and (vi) uar incidence queries, that given hyperedge $e$, returns a uar vertex $v \in e$. 
We will refer to this model as the \emph{dual access model}. Versions of this model have been discussed
for  sublinear algorithms for the set cover problem~\cite{NO08, YYI12, InMaRu+18}.

\begin{mdframed}[backgroundcolor=gray!20,topline=false,bottomline=false,leftline=false,rightline=false]
\medskip
\begin{result} \label{thm:main}
    Consider dual access to input hypergraphs, where the number of vertices $n$ is known. There
    is a randomized algorithm with the following guarantee. Given an input hypergraph with $m$
    edges and parameter $\eps \in (0,1)$,
    the algorithm outputs a $(1+\eps)$-approximation to $m$ with high probability, and makes $O(\eps^{-2} \sqrt{n} + \sqrt{n}\log n)$ queries. (\Cref{thm:esthyp} in~\Cref{sec:wrapup}.)

    Moreover, any algorithm that outputs a constant factor approximation to $m$ must make $\Omega(\sqrt{n})$ queries. (\Cref{thm:lb-est-m} in~\Cref{sec:lb-est-m}.)
\end{result}
\end{mdframed}
We emphasize that the ability to sample a uniformly random hyperedge does not reveal $m$: the oracle  returns a label drawn uniformly from the unknown set $E(H)$. With only such samples, estimating $m$ is the classical support-size estimation problem and requires $\Theta(\sqrt{m})$ samples~\cite{RoTs16}.
Our algorithm exploits the dual incidence access to replace this dependence on $m$ by one that is sublinear in $n$. Unlike the $k$-uniform case, in an arbitrary hypergraph, the sum of degrees is not a scaled version of $m$, and $m$ can be exponentially large in $n$. So even an $O(n)$-query estimation algorithm is not trivial (although it is fairly simple). 
Since the sum of degrees and number of edges do not satisfy a trivial relationship, estimating the average degree is a different problem.
One of the 
steps in our algorithm solves this problem as well. \smallskip

\begin{mdframed}[backgroundcolor=gray!20,topline=false,bottomline=false,leftline=false,rightline=false]
\smallskip
\begin{result} \label{thm:main-avgdeg}
    Consider the same setup as \Cref{thm:main}. There is a randomized algorithm that outputs a $(1+\eps)$-approximation
    to the average degree whp, and makes $O(\eps^{-1}\sqrt{n\log n})$ queries.    (\Cref{thm:estavg} in~\Cref{sec:est-avg}.)

    Moreover, any algorithm that outputs a constant factor approximation to the average degree requires $\Omega(\sqrt{n})$ queries.
       (\Cref{thm:lb-est-avgd} in~\Cref{sec:lb-est-avgd}.)
\end{result}
\end{mdframed}

\subsection{Main ideas}\label{sec:ideas}
We explain some of the key ideas behind our work, and emphasize the overall
simplicity of the algorithm and its analysis. We begin with some notation.
For a vertex $v$, let $\deg(v)$ be the number of hyperedges incident to $v$,
so the average degree is $\davg := \frac{1}{n} \sum_{v\in V}\deg(v)$. Analogously, we can define the dual
average arity as $\eavg := \frac{1}{m} \sum_{e\in E}|e|$. The equivalent of the standard handshake lemma for
graphs is the equation $n \davg = m \eavg$.

The first insight is considering the {\em fractional degree} $h(x)$, defined as $\sum_{e\ni x} 1/|e|$. 
This was introduced in the context of hypergraphs by Huang, Gleich, and Veldt~\cite{HuGlVe24}, but the concept of ``fractional credit or participation'' has been used in many places, including the website \href{https://csrankings.org/}{https://csrankings.org/}.
The fractional degree has two features: (i) $\sum_{x\in V} h(x) = m$, and
(ii) sampling a random hyperedge $e$ and sampling a vertex uniformly at random from $e$ samples $x$ proportional to $h(x)$. 
Observation (i) implies the natural starting point: if we sample a set $S$ of vertices {\em uniformly} at random
and define $h(S) := \sum_{s \in S} h(s)$, then $\frac{h(S)}{|S|}$ is an unbiased estimate of $m/n$. The variance of $h(x)$ values will create
a problem, but now observation (ii) comes in. By sampling a large enough set $U$ of vertices proportional to $h(x)$,
we can hit all vertices with large enough $h(x)$ value. Overall, we estimate $m$ by obtaining estimates of $h(U)$ and $h(S)$ and returning 
$\widehat{h}(U) + \frac{n}{|S|}\widehat{h}(S)$.
A simple calculation shows that it suffices for $|S|$ and $|U|$ to be $\Ot_\eps(\sqrt{n})$.

However, the exact evaluation of $h(x)$ for any $x\in V$ is too
expensive, since it takes $O(\deg(x))$ queries.
Our next step is to estimate $h(S)$ for a subset $S$ of $\Ot_\eps(\sqrt{n})$ vertices. 
Observing that $h(S) = \sum_{s \in S} \sum_{e \ni s} 1/|e|$, we can design a simple unbiased estimator as follows.
First compute all the degrees in $S$, and let the sum be $D$. Pick $s \in S$ with probability
$\deg(s)/D$, pick a uar edge $e$ incident to $s$, and output $D/|e|$. This is easily seen to be an unbiased estimate of $h(S)$. 
The variance calculation yields the connection to the average \emph{arity}, $\eavg$; plugging this in gives 
a clean $\Ot_\eps(\sqrt{n} + \eavg)$ algorithm
to estimate $m$. In particular, if $\eavg \leq \sqrt{n}$, then we are done. 
One is left with the case when $\eavg > \sqrt{n}$. Since the maximum arity of a hyperedge is at most $n$, 
a simple second-moment argument shows that $\eavg$ can be estimated by taking the average arity of $O_\eps(n/\eavg) = O_\eps(\sqrt{n})$ many uar 
hyperedges.
Using the handshake lemma $n \davg = m \eavg$, it now suffices to estimate $\davg$.

We next show that $\davg$ can be estimated, for any value of $\eavg$, using another simple
algorithm along the same ideas. 
We start with the obvious algorithm of sampling $\Ot_\eps(\sqrt{n})$
uar vertices and querying their degrees. The $O(\sqrt{n})$ vertices whose
degree is larger than $\sqrt{n} \davg$ create large variance, so our aim is to simply
discover all of them. Since each such vertex participates in at least $\sqrt{n} \cdot m \eavg/n = m \eavg/\sqrt{n}$ hyperedges, it suffices to sample $\Theta(\sqrt{n}/\eavg)$ uar
hyperedges and query \emph{all} their vertices. The total number of queries, in expectation, is $O(\eavg \cdot \sqrt{n}/\eavg) = O(\sqrt{n})$, 
independent of $\mu$, as desired. We can make the query complexity deterministic without performing a parameter search for $\mu$, by keeping a budget of $\Ot_\eps(\sqrt{n})$ and querying until we run out.

Our final algorithm is quite transparent, and avoids niggling complications
like parameter search, bucketing, and choosing complex thresholds. The analysis is clean and streamlined, leading to lower dependencies on $\log n ,\eps^{-1}$ factors.

\subsection{Discussion and Further Results} \label{sec:pers}

We make a few remarks and offer additional perspectives.

{\bf Knowledge of number of vertices.} The above results assume that $n$ is known. However, it is well known (see, e.g,~\cite{RoTs16}) that given access to random vertices, one can obtain a $(1+\eps)$-estimate of $n$ 
using a standard collision counting method. All our algorithms can work with an estimate of $n$, and therefore, we obtain $\Ot_\eps(\sqrt{n})$-query estimation algorithms without needing to know $n$. Furthermore, we have no restriction on isolated vertices since random hyperedge queries avoid such vertices.

{\bf Dependence on $\eps$.} We note that the query complexity of estimating the average degree in~\Cref{thm:main-avgdeg} is better than that in~\Cref{thm:main}
since $\eps^{-1}\sqrt{n\log n} \leq \eps^{-2}\sqrt{n} + \sqrt{n}\log n$ by the AM-GM inequality.
We leave open the question whether a linear dependence on $\eps$ is possible for estimating the number of hyperedges.

{\bf Estimating average arity.} As noted earlier, we have $n \davg = m \eavg$. Among the unknowns $m, \davg, \eavg$, it suffices to estimate
two of them to get an estimate of all three. So our main results prove that all of them can be accurately estimated using $\Ot_\eps(\sqrt{n})$
queries. While our theorems do not reference $\mu$, our lower bounds constructions prove that 
$\Omega(\sqrt{n})$ queries are required to estimate any one of the quantities to any constant factor, establishing the optimality of our
algorithms (up to $\log n, \eps^{-1}$ dependencies).

{\bf Non-uniformity is provably harder.} As noted earlier, estimating $m$ and $\davg$ are equivalent in $k$-uniform hypergraphs.
Furthermore, in this uniform case, the fractional degree is a scaled version of the degree, and therefore can be obtained with one query exactly.
This makes the problem much easier; indeed, the works of 
Motwani, Panigrahy, and Xu~\cite{MPX07} and that of Beretta and T\v{e}tek~\cite{BT24}
imply $O_\eps(n^{1/3})$-query algorithms for estimating the number of edges/average degree in any 
$k$-uniform\footnote{Indeed, it is not too hard to show by rejection sampling that if all hyperedges are of size $\leq k$, then one can get an $O(kn^{1/3})$-query algorithm to estimate the average degree. Without uniformity, however, this doesn't give anything for $m$-estimation.} hypergraph. Note that $k$ can be arbitrary and need not be a constant. 

The bounds of \Thm{main} and \Thm{main-avgdeg} prove that, for arbitrary hypergraphs, estimating $m$ or $\davg$ has a higher query complexity in terms of $n$ than in uniform hypergraphs. Nevertheless, our ideas can be melded with the ones in~\cite{MPX07,BT24} to prove that if the {\em average} hyperedge arity is bounded (a weaker condition than maximum edge size being bounded, and may hold true in real-world hypergraphs), then one can get an $n^{1/3}$ dependence for estimating the number of hyperedges. The dependence on $\eps$ is not great, and we leave improving this as future work (see~\Cref{sec:open}). Interestingly, estimating the {\em average degree} 
remains ``$\sqrt{n}$-hard'' since the lower bound examples establishing~\Cref{thm:main-avgdeg} have constant average hyperedge arity.

\begin{mdframed}[backgroundcolor=gray!20,topline=false,bottomline=false,leftline=false,rightline=false]
	\smallskip
	\begin{result} \label{thm:main-bndmu}
		Consider the same setup as \Cref{thm:main}. Given a constant factor upper bound on the average hyperedge arity $\eavg$, there is a randomized algorithm 
		that outputs a $(1+\eps)$-approximation to $m$ with high probability, and makes $\Ot_\eps(\eavg n^{1/3})$-many queries
		 (\Cref{thm:est-m-bnd-mu} in~\Cref{sec:est-m-bnd-mu}).
	\end{result}
\end{mdframed}

{\bf Vertex size estimation in bipartite graphs.} 
The problem
of estimating $m$ can be cast as estimating the size of a \emph{vertex} set
in the incidence graph. Consider a bipartite graph $G = (L\cup R,F)$, with vertex sets $L$ and $R$.
The dual access model is equivalent to the standard model on this graph, with one twist.
We can selectively sample uar vertices in $L$ or $R$. This provides significant power.
Our results\footnote{Our hypergraph algorithms allow for parallel copies of hyperedges,
which translates to arbitrary bipartite incidence matrices.} prove that one can estimate the size of (say) $R$ in just $\Ot_\eps(\sqrt{|L|})$ queries,
\emph{regardless} of the size of $R$! Note that in an arbitrary graph $G = (V,E)$, estimating the number of vertices in the standard access model needs $\Omega(\sqrt{|V|})$ queries (see~\cite{BCS26}).

\subsection{Related works} \label{sec:related}

The problem of estimating $m$ (or technically, the average degree) by sampling
was first initiated by Feige~\cite{Fe06} who gave a $(2+\eps)$-approximation using $\Ot_\eps(\sqrt{n/\davg})$ queries; his paper didn't allow neighbor queries. Goldreich and Ron~\cite{GR08} gave the first $\Ot_\eps(\sqrt{n/\davg})$ query 
$(1+\eps)$-approximation for $m$,
exploiting the power of neighbor queries. Eden, Ron, and Seshadhri~\cite{ERS19} gave a significantly simpler
algorithm for this problem making $O(\eps^{-2}\sqrt{n/\davg})$ queries; refer to the appendix of~\cite{BCS26} and~\cite{EdSe26} for an exposition of this algorithm. 
The problem of sampling edges uniformly at random has also received
much attention~\cite{EdRo18,EdRoRo19,EdNaTe22}.
The graph degeneracy has been shown to play a fundamental
role in approximating $m$~\cite{ERS19,EdRoRo19,TT22,Ch26}.

Motwani, Panigrahy, and Xu~\cite{MPX07} showed that \emph{proportional sampling} leads to $\Ot_\eps(n^{1/3})$
query algorithms for estimating the sum of an array. These results directly translate
to estimating $m$, by considering the array of degrees. Beretta and T\v{e}tek~\cite{BT24} give improved
algorithms, and a slick ``Harmonic estimator"
for this problem. We note that these algorithms work directly for $k$-uniform hypergraphs,
with dual access.

T\v{e}tek and Thorup~\cite{TT22} study the $m$ estimation problem in graphs under alternate models of access
for the adjacency list, focusing on the $\eps$ dependence. Beretta, Chakrabarty, and Seshadhri~\cite{BCS26}
showed that pair queries lead to $\Ot_\eps(n^{1/4})$ query algorithms and give a nuanced complexity landscape
for $m$ estimation, depending on the specific query model.
Beame, Har-Peled, Ramamoorthy, Rashtchian, and Sinha~\cite{BeHa+20} initiated a beautiful line of work on estimating $m$ using Independent Set (IS) queries,
that simply output whether a set of vertices contains an edge. There has been further work on this problem, and generalizations to BIS (Bipartite Independent Set) queries~\cite{ChLeWa20,BiGhKo+18,BhBiGh21,BhBiGh+22}.
Recently, Adar, Hotam, and Levi~\cite{AdHoLe26} gave improved algorithms when IS queries are available with the 
standard model.

As noted earlier, the question of estimating the average degree of hypergraphs in the standard access model was first discussed
by Barhum~\cite{Ba07}, for bounded arity hypergraphs. 
Dell and Lapinskas posed the question of estimating $m$
in $k$-uniform hypergraphs, using IS and Colorful IS queries~\cite{DeLa18}. Further followup has lead to near optimal algorithms~\cite{DeLaMe20,DeLaMe24}.

Directly relevant to our dual access model is the study of sublinear algorithms for the set cover problem. A set system is precisely a hypergraph, 
and sublinear algorithms for set cover have been studied in (variants of) the dual access model. This work was initiated by Nguyen and Onak~\cite{NO08} and considered later by
~\cite{YYI12,InMaRu+18}. While~\cite{NO08, YYI12} focused on bounded-degree and bounded-arity hypergraphs, the paper by Indyk, Mahabadi, Rubinfeld, Vakilian, and Yodpinyanee~\cite{InMaRu+18}
considered general hypergraphs, but their algorithms' query complexities were  sublinear in $mn$ (rather than $n$ or $m$).
Sublinear hitting set algorithms were studied by Bishnu, Ghosh, Kolay, Mishra, and Saurabh, in versions of the IS model~\cite{BiGhKo+23}.

\section{Algorithms} \label{sec:algs}

The input is a hypergraph $H = (V(H), E(H))$, where we think of vertices
and hyperedges being represented by IDs. Therefore, when we say ``a hyperedge $e$",
we mean that $e$ is the ID, not the explicit representation as a subset of $V(H)$.
This is important for sublinear algorithms, since individual hyperedges
can have unbounded size.

The following queries constitute the standard ``adjacency list" model.
\begin{itemize}[noitemsep]
	\item $\RV()$: returns a random vertex $x\in V$ uniformly at random.
	\item $\DEG(x)$: returns the number of hyperedges incident on $x$.
	\item $\IncE(x,i)$: if $i\leq \DEG(x)$, then returns the $i$th hyperedge containing $x$. (We  pick $i$ at random to get a uar hyperedge containing $x$.)
\end{itemize}

The ``dual access" provides the following additional queries.
\begin{itemize}[noitemsep]
	\item $\RE()$: returns a random hyperedge $e\in E$ uniformly at random.
	\item $\SIZE(e)$: returns the number of vertices in $e$.
	\item $\IncV(e,j)$: if $j\leq \SIZE(e)$, then returns the $j$th vertex in $e$.
\end{itemize}

In all our algorithms, we assume dual access to the $H$. We also assume that $n$
is known to the algorithms. For convenience, we do not explicitly state these
as inputs. We use ``whp" as a shorthand for probability at least $2/3$. 
We use the shorthand ``$(1\pm \eps)$-estimate for $A$" to mean
an estimate that lies in $[(1-\eps)A, (1+\eps)A]$.

We state the classic Chebyshev's inequality.
\smallskip

\begin{theorem} \label{thm:cheb} Let $X$ be a random variable.
Then for any $t>0$, $\Pr[|X - \EX[X]| \geq t] \leq \Var[X]/t^2$.
\end{theorem}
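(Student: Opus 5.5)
This is Chebyshev's inequality, and I will derive it from Markov's inequality applied to the squared deviation. Set $\mu := \EX[X]$ and define the nonnegative random variable $Y := (X-\mu)^2$. By definition $\EX[Y] = \Var[X]$. If $\Var[X]$ is infinite, the claim is vacuous, so we may assume it is finite.

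The first step is to rewrite the event. For $t>0$, $|X-\mu|\ge t$ holds if and only if $Y \ge t^2$, because squaring is monotone on nonnegative reals. So it suffices to bound $\Pr[Y\ge t^2]$.

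The second step is Markov's inequality for the nonnegative variable $Y$. I will prove it inline. Write
\[
Y \;\ge\; t^2\cdot \mathbf{1}[Y\ge t^2].
\]
This pointwise bound holds in both cases: when the indicator is $1$, the inequality reads $Y\ge t^2$; when it is $0$, it reads $Y\ge 0$. Taking expectations and using monotonicity and linearity of expectation gives
\[
\Var[X] \;=\; \EX[Y] \;\ge\; t^2 \Pr[Y\ge t^2] \;=\; t^2\Pr\bigl[|X-\mu|\ge t\bigr].
\]
Dividing by $t^2>0$ yields the statement.

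There is no real obstacle here. The only points needing care are that $Y$ is nonnegative, which justifies Markov's inequality, and the equivalence of the two events, which uses $t>0$.
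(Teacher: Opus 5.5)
Your proof is correct: it is the standard derivation of Chebyshev's inequality, applying Markov's inequality to the nonnegative variable $(X-\EX[X])^2$ (which you prove inline via the pointwise bound $Y \ge t^2\cdot\mathbf{1}[Y\ge t^2]$) and using that $|X-\EX[X]|\ge t$ is equivalent to $(X-\EX[X])^2\ge t^2$ when $t>0$. The paper does not prove this statement; it quotes Chebyshev's inequality as a classical fact, and your argument is the usual textbook justification for it.
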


In what follows, for clarity's sake, we have not optimized constant factors. In particular, $C$, $C_0$, and $C_1$ below
are sufficiently large constants.

\subsection{Estimating Average Degree} \label{sec:est-avg}

We begin with a simple algorithm for estimating $\davg$ of an arbitrary hypergraph. We set a query budget $B = O(\sqrt{n\log n}/\eps)$, and
sample uar hyperedges and all incident vertices 
until the budget is exhausted. We query the degrees of the vertices and sum those exactly.
With high probability, this process discovers all the ``high-degree'' vertices, and computes their contribution to $n\davg$.
The remaining vertices have low degree, so the
standard empirical mean on $B$ uar vertices is a good estimate
of the remaining contribution. \smallskip

\begin{algorithm}
	\caption{\estavg$(\eps)$} \label{alg:estavg}
	\begin{algorithmic}[1]
		\State $U \eq \emptyset$, $B \eq \ceil{\frac{C\sqrt{n\log n}}{\eps}}$, and $j\eq 1$.
		\While{True}:
			\State Use $\RE$ to sample random hyperedge $e_j$ and query $|e_j| \eq \SIZE(e_j)$.
			\State If $\sum_{i=1}^j |e_i| > B$: {\bf break}. \label{alg:s4}
			\State Add all vertices in $e_j$ to $U$, making $|e_j|$ many $\IncV$ queries.
		\EndWhile
		\State Query $\DEG(x)$ for all $x\in U$; Set $\est_1 \eq \frac{1}{n}\sum_{x\in U} \deg(x)$
		\For{$i = 1, \ldots, B$}: \label{alg:s7}
			\State Use $\RV$ to sample random vertex $x\in V$.
 			\State If $x\notin U$, query $\DEG(x)$ and set $Y_i \eq \deg(x)$; otherwise, $Y_i \eq 0$.
		\EndFor
		\State $\est_2 \eq \frac{1}{B} \sum_{i=1}^B Y_i$
		\State Return $\est \eq \est_1 + \est_2$
	\end{algorithmic}
\end{algorithm}

\begin{theorem} \label{thm:estavg}
\estavg$(\eps)$ returns an
$(1\pm\eps)$-estimate of $\davg$ with probability $\geq 0.99$ making $O(\eps^{-1}\sqrt{n\log n})$ queries.  
\end{theorem}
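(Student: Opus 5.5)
The plan is to split the analysis around a degree threshold $\tau := \sqrt{n}\,\davg/\eps$ (possibly adjusted by a $\sqrt{\log n}$ factor; see below). Call a vertex \emph{heavy} if $\deg(x) \geq \tau$ and light otherwise.

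\textbf{Query complexity.} The while loop makes at most $B+1$ $\RE$ and $\SIZE$ queries. Because of the break, it makes at most $B$ $\IncV$ queries, so $|U| \leq B$ and the $\DEG$ queries on $U$ number at most $B$. The for loop costs $O(B)$ queries. The total is $O(B) = O(\eps^{-1}\sqrt{n\log n})$ deterministically.

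\textbf{Step 1: heavy vertices are caught.} I will show that whp every heavy vertex lies in $U$. A uar hyperedge contains $x$ with probability $\deg(x)/m$. Let $J$ be the number of hyperedges fully processed. Each $e_j$ has expected arity $\eavg$, so $\sum_{i\le j}|e_i|$ has mean $j\eavg$. By Markov's inequality (or a Chebyshev argument, using $|e|\leq n$), I expect $J \geq B/(2\eavg)$ with constant probability. I would actually argue this more carefully: the prefix of the first $t = \lfloor B/(c\eavg)\rfloor$ samples has total arity at most $B$ with probability $\geq 1-1/c$, and then $U$ contains all vertices of $e_1,\dots,e_t$.

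A heavy $x$ is missed by all $t$ samples with probability at most
\[
(1-\deg(x)/m)^t \leq \exp\bigl(-t\tau/m\bigr).
\]
Using $m\eavg = n\davg$, we get
\[
t\tau/m \approx \frac{B\tau}{c\, n\davg} = \frac{C\sqrt{n\log n}\cdot \sqrt{n}\,\davg}{\eps\cdot \eps\, c\, n\davg}.
\]
This is at least $(C/c)\sqrt{\log n}/\eps^2$ even with no $1/\eps$ in $\tau$. So I should choose $\tau$ to balance against Step 2. I will take $\tau := \eps\sqrt{n/\log n}\,\davg \cdot K$ for a constant $K$, which makes $t\tau/m \geq (CK/c)\log n$. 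The miss probability per heavy vertex is then $n^{-2}$, and a union bound over at most $n$ vertices finishes this step. (There are at most $n\davg/\tau$ heavy vertices, but the crude union bound already suffices.)

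\textbf{Step 2: concentration of $\est_2$.} Conditioned on $U$ containing all heavy vertices, $\est_1 = \frac1n\sum_{x\in U}\deg(x)$ exactly. Moreover, $\EX[\est_2] = \frac1n\sum_{x\notin U}\deg(x)$, so the sum is unbiased for $\davg$; this requires noting that $\est_2$'s randomness is independent of $U$. Each $Y_i$ lies in $[0,\tau]$ with mean at most $\davg$. Hence $\Var[Y_i]\leq \tau\davg$, and
\[
\Var[\est_2]\leq \frac{\tau\davg}{B} = \frac{K\eps\sqrt{n/\log n}\,\davg^2\,\eps}{C\sqrt{n\log n}} = \frac{K\eps^2\davg^2}{C\log n}.
\]
Chebyshev (\Cref{thm:cheb}) with $t=\eps\davg$ then gives failure probability $K/(C\log n)$, which is small.

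\textbf{Main obstacle.} The delicate part is Step 1: the random stopping time of the budgeted loop interacts with the sampling, and $\eavg$ is unknown. I will handle this by analyzing the fixed prefix $e_1,\dots,e_t$ with $t = \lfloor B/(c\eavg)\rfloor$ and bounding $\Pr[\sum_{i\le t}|e_i|>B]\leq 1/c$ via Markov. This costs only a constant failure probability. To reach $0.99$, I take $c=200$ and $C$ large. A union bound over the three failure events completes the proof.
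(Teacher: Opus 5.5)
Your plan is the paper's: split at a degree threshold, catch every heavy vertex by analysing the fixed prefix $e_1,\dots,e_t$ of the edge-sampling loop (Markov on its total arity), then apply Chebyshev to $\est_2$ using that uncaught vertices are light. Your query count and Step 2 are fine. However, Step 1 fails for the threshold you finally choose. With $\tau = K\eps\sqrt{n/\log n}\,\davg$ and $B = C\sqrt{n\log n}/\eps$ you get $B\tau = CKn\davg$, hence
\[
\frac{t\tau}{m} \approx \frac{B\tau}{c\,\eavg m} = \frac{B\tau}{c\,n\davg} = \frac{CK}{c},
\]
a constant, not $(CK/c)\log n$. So your bound on the probability of missing a vertex of degree just above $\tau$ is only the constant $e^{-CK/c}$. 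Since there can be up to $n\davg/\tau = \sqrt{n\log n}/(K\eps)$ heavy vertices, the union bound gives nothing. This is not just a weak analysis: the event itself fails. Take a graph containing $\Theta(\sqrt{n\log n}/\eps)$ vertex-disjoint stars whose centres have degree just above $\tau$. Each centre is missed with probability about $e^{-CK}$, and the probability that all centres land in $U$ tends to $0$. The spare factor $1/\log n$ in your bound on $\Var[\est_2]$ is exactly the factor Step 1 is missing.

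The repair is to balance the other way and take $\tau = K\eps\sqrt{n\log n}\,\davg$, which is the paper's threshold. Then $t\tau/m \ge \frac{CK}{2c}\log n$, so each heavy vertex is missed with probability at most $n^{-2}$ once $CK$ is a large enough multiple of $c$. Step 2 then gives $\Var[\est_2] \le \tau\davg/B = K\eps^2\davg^2/C$, so Chebyshev fails with probability at most $K/C$, which is small for $C \gg K$. With this threshold the floor in $t$ is also harmless. A heavy vertex can exist only if $\tau \le m$, i.e.\ $\eavg \le \sqrt{n}/(K\eps\sqrt{\log n})$, and then $B/(c\eavg) \ge (CK/c)\log n \ge 1$; otherwise Step 1 is vacuous. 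After this change your proof coincides with the paper's.
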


\begin{proof}
We begin with the error analysis.
Partition the vertices into ``heavy'' and ``light'' as follows.
\[ H := \{x\in V~:~ \deg(x) > \davg \eps \sqrt{n\log n} \}~~\text{and}~~ L := V\setminus H\]
Define the good event which states that all heavy vertices are captured in the set $U$ after Step 1. That is, 
$\calE := \{H \subseteq U\}$. We assert:
\begin{lemma}\label{lem:1}
	$\Pr[\calE] \geq 0.999\cdot \left(1-1/n\right)$.
\end{lemma}
\begin{proof}
Let $q := \left\lceil C_1\frac{\sqrt {n\log n}}{\eps \eavg}	\right\rceil$ and let $e_1,\ldots,e_q$ denote the first $q$ sampled edges of Stage I. Note that this quantity is only used for analysis; the algorithm doesn't know $q$ since it doesn't know $\eavg$.
Consider the event 
$\calE_1 := \left\{\sum_{i=1}^q|e_i|\leq B\right\}$.
Note that $\calE_1$ implies that we sample at least $q$ edges in Step 1.

We show that $\Pr[\calE_1] \geq 0.999$. Since $\Exp[|e_i|] = \eavg$, we get that $\Exp[\sum_{i=1}^q |e_i|] = q\eavg \leq (C_1 + 1)\frac{\sqrt{n\log n}}{\eps}$.
If $C$ is suitably larger than $C_1$, then $\Pr[\calE_1] \geq 0.999$.
	
Now fix a vertex $x\in H$. The probability the $i$th sample $e_i$ contains $x$ is precisely $\frac{\deg(x)}{m}$.
Since $\calE_1$ implies $|U|\geq q$, the probability $x\notin U$ is 
	\[
		\Pr[x\notin U~|~\calE_1] \leq \left(1 - \frac{\deg(x)}{m}\right)^q \leq \left(1 - \frac{\davg\eps\sqrt{n\log n}}{m}\right)^q < \exp\left( -  \frac{q\mu \eps \sqrt{\log n}}{\sqrt{n}}\right) < \frac{1}{n^{2}}
	\]
	where the second inequality follows 
	using $n\davg = m\eavg$ and $(1-z)< e^{-z}$, and the third inequality follows because conditioned on $\calE_1$, $q$ is large enough. It also assumes the constant $C_1$ is large enough.
	Taking a union bound on all $x\in H$, we get $\Pr[\calE~|~\calE_1] \geq \left(1 - \frac{1}{n}\right)$, and so the lemma follows from the fact that $\Pr[\calE_1] \geq 0.999$.
\end{proof}

\noindent
Let us fix the random set $U$ after Step 1 and assume $\calE$ is satisfied; we will account for $\Pr[\neg \calE]$ in the final error calculation.
Note that for any $1\leq i\leq B$, we have 
$\Exp[Y_i] = \frac{1}{n}\sum_{x\notin U} \deg(x)$ and $\Var[Y_i] \leq \Exp[Y^2_i] = \frac{1}{n}\sum_{x\notin U} \deg^2(x)$.
Since $\est_2$ is an average of $B$ iid samples, and since $\est = \est_1 + \est_2$, we get 
\[
\Exp[\est] = \Exp[\est_1 + \est_2] = \frac{1}{n}\sum_{x\in U} \deg(x) + \frac{1}{n} \sum_{x\notin U} \deg(x) ~~=~\davg
\]
establishing that $\est$ is an unbiased estimate of the average degree. Furthermore, if $\calE$ is satisfied, any $x\notin U$ lies in $L$.
So, $\deg(x) \leq \davg\eps\sqrt{n\log n}$, for all $x\notin U$. This upper bounds the variance of $Y_i$, and thus $\est_2$:
\[
\Var[Y_i] \leq \davg\eps\sqrt{n\log n} \cdot \Exp[Y_i] \leq \davg^2\eps\sqrt{n\log n} ~~\Rightarrow~~\Var[\est_2] \leq \frac{\davg^2\eps\sqrt{n\log n}}{B} \leq 0.001 \eps^2\davg^2
\]
Since $\Var[\est_1] = 0$, once we fix $U$, we get $\Var[\est] \leq 0.001\eps^2\davg^2$. By the Chebyshev inequality, we get
%
\[
\Pr[|\est - \davg| \geq \eps \davg] \leq 0.001 
\]
Therefore, the total error probability is $\Pr[\neg \calE] + 0.001 < 0.01$ from \Lem{1} (assuming $n$ is large enough).
This completes the proof.

\noindent
{\bf Query Complexity.} Since all $|e|\geq 1$, the number of $\RE$ queries, $\IncV$ queries, and $\SIZE(e)$ queries in Step 1 is $O(\sqrt{n\log n}/\eps)$.	
The size $|U|$ is also bounded by $B$, and therefore, the number of $\DEG$ queries in Step 2 is $O(\sqrt{n\log n}/\eps)$. Finally, in Step 3(a) and (b), 
we make $O(\sqrt{n\log n}/\eps)$ many $\RV$ and $\DEG$ queries. Note that we don't make any $\IncE$ queries. In sum, the total query complexity is $O(\sqrt{n\log n}/\eps)$.
\end{proof}

\subsection{Fractional Degree and Hyperedges} \label{sec:harm}

Define the {\em fractional degree} $h(x)$ of a vertex $x$ as 
$
	h(x) := \sum_{e \ni x} \frac{1}{|e|}.
$
The simple but {\em key} observation is:
\begin{equation}\label{eq:h-to-m}
	\sum_{x\in V} h(x) = m
\end{equation}
This converts the problem to estimating the sum of a vector $h$ indexed by vertices of $V$. 
For any subset $S \subseteq V$, let $h(S) := \sum_{s \in S} h(s)$.
We cannot evaluate $h(s)$ in $O(1)$ queries, but can design a simple unbiased estimator
for $h(s)$. Our main tool is the following algorithm, that estimates $h(S)$ for any subset $S$.
It is convenient to have an input parameter $b$ that decides the sampling budget.

We use $\deg(S)$ to denote $\sum_{s \in S} \deg(s)$.

\begin{algorithm}
	\caption{\estH$(S,b)$} \label{alg:estH}
	\begin{algorithmic}[1]
        \State Query $\DEG(s)$ for all $s\in S$ and compute $\deg(S) \eq \sum_{s\in S} \deg(s)$.
		\For{$i = 1, \ldots, b$}:
			\State Sample a vertex $s\in S$ with probability $\frac{\deg(s)}{\deg(S)}$.
 			\State Make an $\IncE(s)$ query to sample uar hyperedge $e \ni s$.
            \State Set $Y_j \eq \frac{\deg(S)}{|e|}$.
		\EndFor
		\State Return $Y \eq \frac{1}{b}\sum_{j=1}^b Y_j$
	\end{algorithmic}
\end{algorithm}

\begin{lemma} \label{lem:estH} The procedure \estH$(S,b)$ makes $|S|+b$ queries.
Let $Y$ denote the output of \estH$(S,b)$. Then, for any $t>0$, we have $\Pr[|Y - h(S)| \geq t] \leq \frac{\deg(S) h(S)}{bt^2}$.
\end{lemma}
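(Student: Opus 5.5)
The plan is to count queries directly, then show that each $Y_j$ is an unbiased estimator of $h(S)$ whose second moment is at most $\deg(S)\,h(S)$, and finish with \Thm{cheb}.

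\textbf{Query count.} Line 1 of \estH$(S,b)$ makes exactly one $\DEG$ query per $s\in S$, so $|S|$ queries. Once the degrees are known, sampling $s$ with probability $\deg(s)/\deg(S)$ is internal randomness and costs no query. Each of the $b$ iterations then makes a single $\IncE(s,i)$ query with $i$ uniform in $[\deg(s)]$. I will take the arity $|e|$ of the returned hyperedge as part of that incidence access, following the accounting implicit in the lemma. Under that convention the total is $|S|+b$. If a separate $\SIZE$ call were charged, the count would become $|S|+2b$ rather than the stated $|S|+b$.

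\textbf{Expectation.} Fix $j$. A pair $(s,e)$ with $e\ni s$ is produced with probability $\frac{\deg(s)}{\deg(S)}\cdot\frac{1}{\deg(s)}=\frac{1}{\deg(S)}$, so
\[
\Exp[Y_j]=\sum_{s\in S}\sum_{e\ni s}\frac{1}{\deg(S)}\cdot\frac{\deg(S)}{|e|}=\sum_{s\in S}h(s)=h(S).
\]
The $Y_j$ are iid, so $\Exp[Y]=h(S)$.

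\textbf{Variance.} By the same calculation,
\[
\Exp[Y_j^2]=\sum_{s\in S}\sum_{e\ni s}\frac{1}{\deg(S)}\cdot\frac{\deg(S)^2}{|e|^2}\le \deg(S)\sum_{s\in S}\sum_{e\ni s}\frac{1}{|e|}=\deg(S)\,h(S).
\]
The inequality uses $1/|e|^2\le 1/|e|$, which holds since $|e|\ge 1$. Hence $\Var[Y_j]\le \deg(S)h(S)$. By independence, $\Var[Y]\le \deg(S)h(S)/b$.

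\textbf{Conclusion and main obstacle.} Applying \Thm{cheb} to $Y$ gives $\Pr[|Y-h(S)|\ge t]\le \frac{\deg(S)h(S)}{bt^2}$ for every $t>0$, which is the claimed bound. The only step needing a real idea is the second-moment bound, where the crude estimate $1/|e|^2\le 1/|e|$ turns the sum back into $h(S)$. Everything else is bookkeeping, including the query-accounting convention for $|e|$ noted above.
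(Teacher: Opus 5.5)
Your proposal is correct and matches the paper's proof essentially line for line: you count $|S|$ $\DEG$ queries plus $b$ $\IncE$ queries, show $\Exp[Y_j]=h(S)$, bound $\Exp[Y_j^2]\le \deg(S)\,h(S)$ using $|e|\ge 1$, and finish with Chebyshev on the average of $b$ iid samples. Your remark about the $\SIZE$ query is fair, since the paper's count also treats $|e|$ as free; charging it would give $|S|+2b$ queries, which changes nothing in the asymptotic bounds where the lemma is used.
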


\begin{proof} The query bound is easy to see; there are $|S|$ degree queries
made and exactly $b$ queries to $\IncE$.
For a fixed $j$, let us calculate the mean and variance of $Y_j$. Note that
	\[
	\Exp[Y_j] ~=~ \sum_{s\in S} \frac{\deg(s)}{\deg(S)} \cdot \sum_{e\ni s} \frac{1}{\deg(s)}\cdot \frac{\deg(S)}{|e|} ~=~ \sum_{s\in S} \underbrace{\sum_{e\ni s} \frac{1}{|e|}}_{=h(s)}~=h(S)
	\]
Next, note that
	\[
	\Exp[Y_j^2] ~=~ \sum_{s\in S} \frac{\deg(s)}{\deg(S)} \cdot \sum_{e\ni x} \frac{1}{\deg(s)}\cdot \frac{\deg(S)^2}{|e|^2} ~\leq ~ \deg(S) \cdot \sum_{s\in S}\underbrace{\sum_{e\ni s} \frac{1}{|e|}}_{=h(s)} ~=~\deg(S) h(S)
	\] 
	where we used $|e|\geq 1$ in the inequality.
Therefore, $\Exp[Y] = h(S)$ and $\Var[Y] \leq \frac{\deg(S) h(S)}{b}$. A direct
application of Chebyshev's inequality completes the proof.
%
%
\end{proof}

\subsection{Estimating Number of Hyperedges} \label{sec:wrapup}

We wrap everything up. We begin with a standard observation that if $\eavg = \Omega(\sqrt{n})$, then averaging the size of $O(\eps^{-2}\sqrt{n})$ random hyperedges is a $(1\pm \eps)$-estimate.
The following subroutine either provides a good estimate to $\eavg$ or asserts that $\eavg \leq \sqrt{n}$.

\begin{algorithm}
	\caption{\estmu$(\eps)$} \label{alg:estmu}
	\begin{algorithmic}[1]
 		\State Sample $r := \ceil{\eps^{-2} C_0 \sqrt{n}}$ uar edges $e_1, \ldots, e_r$ using $\RE()$ queries.
		\State Obtain all $|e_i|$ values using $\SIZE()$ queries.
		\State Evaluate $X \eq \frac{1}{r} \sum_{i=1}^r |e_i|$
		\State If $X > 0.5\sqrt{n}$, return $X$ asserting it is a $(1\pm \eps)$-estimate.
		\State Else: assert $\eavg \leq \sqrt{n}$.
	\end{algorithmic}
\end{algorithm}

\smallskip

\begin{lemma}\label{lem:estmu} The procedure \estmu$(\eps)$ outputs a correct assertion
with probability $\geq 0.99$, and makes $O(\eps^{-2}\sqrt{n})$ queries.
\end{lemma}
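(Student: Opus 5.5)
The query bound is immediate: the procedure makes $r = O(\eps^{-2}\sqrt{n})$ calls to $\RE$ and $r$ calls to $\SIZE$. For correctness, we first record the basic moments of $X$. Each $|e_i|$ is an independent sample with $\Exp[|e_i|]=\eavg$. Since $1\le |e_i|\le n$, we have $\Var[|e_i|]\le \Exp[|e_i|^2]\le n\eavg$. Hence $\Exp[X]=\eavg$ and
\[
\Var[X]\le \frac{n\eavg}{r}\le \frac{\eps^2\sqrt{n}\,\eavg}{C_0}.
\]
The proof then splits on the size of $\eavg$ relative to $\sqrt n$. The thresholds are chosen so that in each case either both assertions are true or Chebyshev (\Cref{thm:cheb}) rules out the bad one.

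\textbf{Case $\eavg\ge 0.25\sqrt{n}$.} Apply Chebyshev with $t=\eps\eavg$:
\[
\Pr\bigl[|X-\eavg|\ge \eps\eavg\bigr]\le \frac{\eps^2\sqrt{n}\,\eavg}{C_0\,\eps^2\eavg^2}=\frac{\sqrt n}{C_0\eavg}\le \frac{4}{C_0}\le 0.01
\]
for $C_0$ large. So with probability at least $0.99$, $X$ is a $(1\pm\eps)$-estimate. If the algorithm then returns $X$ in Step~4, that assertion is correct. If instead it goes to Step~5, then $X\le 0.5\sqrt n$, and since $(1-\eps)\eavg\le X$ we get $\eavg\le 0.5\sqrt n/(1-\eps)$. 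This is at most $\sqrt n$ when $\eps\le 1/2$, so the assertion $\eavg\le\sqrt n$ is also correct. For larger $\eps$ one can run the procedure with $\min(\eps,1/2)$, at no asymptotic cost. Thus in this case both outcomes are correct on the good event.

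\textbf{Case $\eavg<0.25\sqrt{n}$.} The assertion $\eavg\le\sqrt n$ is true, so the only failure is $X>0.5\sqrt n$. That requires $X-\eavg>0.25\sqrt n$. Chebyshev gives
\[
\Pr\bigl[X-\eavg>0.25\sqrt n\bigr]\le \frac{\eps^2\sqrt n\,\eavg}{C_0\cdot n/16}\le \frac{16\cdot 0.25\,\eps^2}{C_0}\le 0.01.
\]

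The only delicate point is the choice of the case threshold. We cannot split at $\sqrt n$ or at $0.5\sqrt n$ directly, because near the threshold the estimate may land on either side. Splitting at $0.25\sqrt n$ resolves this. Above it, any returned value is accurate and any "else" branch is still truthful. Below it, the returned branch is unlikely. The variance bound $\Var[|e|]\le n\eavg$, which uses $|e|\le n$, is what makes $r=O(\eps^{-2}\sqrt n)$ suffice in both cases.
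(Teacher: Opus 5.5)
Your proof is correct and is essentially the paper's argument: both rest on $\Var[|e|]\le n\eavg$ and Chebyshev for large $\eavg$, and they differ only in two small ways. The paper handles very small $\eavg$ (below $0.005\sqrt n$) by Markov's inequality rather than Chebyshev, and your observation that both branches are correct on the good event merges the paper's separate ranges $[0.005\sqrt n,\sqrt n]$ and $\eavg>\sqrt n$.

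The paper's step ``$X\ge 0.5\sqrt n$ when $\eavg>\sqrt n$'' also tacitly needs $\eps\le 1/2$. Instead of changing the input to $\min(\eps,1/2)$, you can keep the procedure as written. Since $\min(\eps,1/2)\ge \eps/2$ for $\eps<1$, Chebyshev with deviation $\min(\eps,1/2)\,\eavg$ still succeeds once $C_0$ is enlarged by a factor of $4$.
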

\begin{proof} The query complexity bound is immediate, since exactly $r$ $\RE()$ and $\SIZE()$ queries are made.

We have $\Exp[X] = m^{-1}\sum_{e\in E} |e| = \eavg$.
Let $X_i$ be the random variable $|e_i|$, so $\Var[X_i] \leq \Exp[X^2_i]
\leq \max X_i \cdot \Exp[X_i] \leq n \eavg$. Hence, $\Var[X] \leq n \eavg/r$.
By Chebyshev's inequality,
$\Pr[|X - \eavg| \geq \eps \eavg] \leq \frac{n}{r\eps^2 \eavg}$.
If $\eavg > 0.005\sqrt{n}$, then for $r = \ceil{\eps^{-2} C_0 \sqrt{n}}$ (and sufficiently large $C_0$)
$r \eps^2 \eavg > 100n$. So, $X$ is a $(1\pm \eps)$-estimate with at least $0.99$ probability.

If $\eavg < 0.005\sqrt{n}$, then $\Pr[X > 0.5\sqrt{n}] \leq 0.01$ by the Markov
inequality. With at least $0.99$ probability, the algorithm correctly asserts that $\eavg \leq \sqrt{n}$.
If $\eavg \in [0.005\sqrt{n}, \sqrt{n}]$: if $X$ is output, the assertion
is correct with probability at least $0.99$. Otherwise, the correct assertion ``$\eavg \leq \sqrt{n}$" is output.
If $\eavg > \sqrt{n}$, then with probability at least $0.99$, $X$ is a $(1\pm\eps)$-approximation
for $\eavg$ and is hence at least $0.5\sqrt{n}$. So the output is correct.
\end{proof}

The full algorithm to estimate the number of hyperedges is the following.

\begin{algorithm}
	\caption{\estHyp$(\eps)$} \label{alg:esthypedge}
	\begin{algorithmic}[1]
		\State Call \estavg$(\eps)$ to get estimate $\widehat{\davg}$.
		\State Call \estmu$(\eps)$. 
		\If{\estmu\ outputs estimate $\widehat{\mu}$}
		\State \Return $n\widehat{\davg}/\widehat{\mu}$.
		\EndIf
		\LineComment{\estmu\ asserts that $\eavg \leq \sqrt{n}$.}
		\State $q \eq \ceil{C_1 \sqrt{n} \log n}$, $U \eq \emptyset$, $b \eq \ceil{C_1 \eps^{-2} \sqrt{n}}$, $S \eq \emptyset$
		\For{$i = 1,\ldots, q$}: 
			\State Sample hyperedge $e_i$ using $\RE()$.
			\State Get uar $u \in e_i$ using $\IncV(e_i)$ query, and add $u$ to $U$.
		\EndFor
		\State Get $Y_U$ as the output of \estH$(U,b)$.
		\For{$i = 1,\ldots,b$}:
			\State Get uar vertex $s_i$ using $\RV()$. If $s_i \notin U$, add $s_i$ to $S$.
		\EndFor
		\State Get $Y_S$ as the output of \estH$(S,b)$.
		\State \Return $(n/b)Y_S + Y_U$.
	\end{algorithmic}
\end{algorithm}

\begin{theorem} \label{thm:esthyp} The output is \estHyp$(\eps)$
is a $(1\pm2\eps)$-estimate for $m$ with probability at least
$2/3$. The query complexity is $O(\eps^{-2} \sqrt{n} + \sqrt{n}\log n)$.
\end{theorem}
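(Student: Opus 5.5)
Condition on \estavg\ and \estmu\ both succeeding; by \Thm{estavg} and \Lem{estmu} this fails with probability at most $0.02$. The argument then splits on which branch \estmu\ takes.

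\textbf{Branch 1: \estmu\ returns $\widehat{\mu}$.} Both estimates are $(1\pm\eps)$-accurate, and the handshake identity $m=n\davg/\eavg$ gives that $n\widehat{\davg}/\widehat{\mu}$ lies in $[\frac{1-\eps}{1+\eps}m,\frac{1+\eps}{1-\eps}m]$. This is a $(1\pm O(\eps))$-estimate; running the two subroutines with $\eps/3$ makes it $(1\pm 2\eps)$ at only constant-factor cost.

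\textbf{Branch 2: \estmu\ asserts $\eavg\le\sqrt n$.} Here I use the decomposition $m=h(V)=h(U)+h(V\setminus U)$ from \eqref{eq:h-to-m}.

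\emph{Step 1 (heavy vertices are caught).} Define $H:=\{x: h(x)> m/\sqrt n\}$, so $|H|<\sqrt n$. Each draw of $U$ picks $x$ with probability exactly $h(x)/m$, by observation (ii). Hence with $q=C_1\sqrt n\log n$ draws,
\[
\Pr[x\notin U]\le (1-1/\sqrt n)^q\le n^{-2},
\]
and a union bound gives $H\subseteq U$ with probability $1-1/n$. Also $|U|\le q$.

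\emph{Step 2 (estimating $h(U)$).} Apply \Lem{estH} with $t=\eps m/2$; the failure probability is $4\deg(U)h(U)/(b\eps^2m^2)$. Since $h(U)\le m$, it suffices that $\deg(U)=O(\sqrt n\cdot m)$ with constant probability. The expectation is
\[
\Exp[\deg(U)]\le q\sum_x \frac{h(x)}{m}\deg(x),
\]
and this is the main obstacle, because $\deg(x)$ can be large relative to $h(x)$. I would bound $\deg(U)\le n\davg=m\eavg\le m\sqrt n$, which holds deterministically since $U\subseteq V$ and $\eavg\le\sqrt n$ in this branch. Then the failure probability is at most $4m^2\sqrt n/(b\eps^2 m^2)=4\sqrt n/(b\eps^2)$, which is small once $b=C_1\eps^{-2}\sqrt n$.

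\emph{Step 3 (estimating the rest).} $S$ is a multiset of $b$ uniform samples with the vertices of $U$ discarded. I treat the estimate in two stages.
\begin{itemize}
\item \emph{Sampling error.} $\frac nb h(S)$ is unbiased for $h(V\setminus U)$. Each sample contributes at most $m/\sqrt n$, because light vertices have $h\le m/\sqrt n$ given $H\subseteq U$. So the variance is at most $\frac{n^2}{b}\cdot\frac1n\sum_{x\notin U}h(x)^2\le \frac{n}{b}\cdot\frac{m}{\sqrt n}\cdot\frac{m}{n}\cdot n/n$, which simplifies to about $m^2\sqrt n/b$. Chebyshev then gives error $\eps m/2$ once $b\ge C\eps^{-2}\sqrt n$.
\item \emph{Estimation error.} Apply \Lem{estH} to $S$ with $t=\eps m b/(2n)$, using $\deg(S)\le m\sqrt n$ deterministically again. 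The probability bound is
\[
\frac{\deg(S)h(S)\,4n^2}{b^3\eps^2m^2}.
\]
Substituting $\deg(S)\le m\sqrt n$, this is at most $4n^{2.5}\,h(S)/(b^3\eps^2 m)$. Markov on $h(S)$ with $\Exp[h(S)]\le bm/n$ turns this into $O(n^{1.5}/(b^2\eps^2))$, which is not small. So this step needs a sharper bound on $\deg(S)$. Since $\Exp[\deg(S)]\le b\davg=bm\eavg/n\le bm/\sqrt n$, Markov gives $\deg(S)=O(bm/\sqrt n)$ with constant probability. Combined with $h(S)=O(bm/n)$, the bound becomes $O(\sqrt n/(b\eps^2))$, which is small. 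With both estimates, the total error is at most $2\eps m$.
\end{itemize}

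I would also revisit Step 2 in the same spirit, though the deterministic bound already suffices there.

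Finally, summing the failure probabilities with suitable constants keeps the total below $1/3$. The query count is the sum of the subroutine costs, $O(\eps^{-1}\sqrt{n\log n})+O(\eps^{-2}\sqrt n)$, plus $q+|U|+|S|+2b=O(\sqrt n\log n+\eps^{-2}\sqrt n)$ by \Lem{estH}. The AM–GM remark absorbs the first term.
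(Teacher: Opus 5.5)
Your proposal is correct and follows the paper's proof essentially step for step: the same conditioning on the two subroutines, the deterministic bound $\deg(U)\le n\davg=m\eavg\le m\sqrt n$ for $Y_U$, the coverage argument putting every $x$ with $h(x)>m/\sqrt n$ into $U$, Chebyshev on $(n/b)h(S)$ using $\max_{x\notin U}h(x)\le m/\sqrt n$, and Markov bounds on $\deg(S)$ and $h(S)$ before applying \Lem{estH} to $Y_S$. Your Branch~1 remark is a fair catch that the paper glosses over, since $(1+\eps)/(1-\eps)>1+2\eps$. Also, the middle expression in your variance chain is garbled; it should read $\frac{n}{b}\cdot\frac{m}{\sqrt n}\cdot\sum_{x\notin U}h(x)\le m^2\sqrt n/b$, though the final bound you use is right.
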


\begin{proof} 
The query complexity follows from an inspection of the algorithm, and from the query complexity of \estavg($\eps$) and \estmu($\eps$).
Also recall that $\eps{-1}\sqrt{n\log n} \leq \eps^{-2}\sqrt{n} + \sqrt{n}\log n$ by AM-GM inequality. We now move to the error analysis.

Consider the two function calls, \estavg$(\eps)$ and \estmu$(\eps)$.  By \Thm{estavg} and \Lem{estmu},
and the union bound, the probability of both procedures being
correct is at least $0.98$. Let us condition on this event.

If \estmu$(\eps)$ outputs an estimate $\widehat{\mu}$,
then by \Lem{estmu}, it is a $(1\pm\eps)$-estimate for $\mu$.
By \Thm{estavg}, $\widehat{\davg}$ is a $(1\pm\eps)$-estimate
for $\davg$. Since $m = n\davg/\mu$, the output $n\widehat{\davg}/\widehat{\mu}$ is a $(1\pm 2\eps)$-estimate for $m$.

Now, suppose that \estmu$(\eps)$ asserts that $\mu \leq \sqrt{n}$ (we conditioned on the assertion being correct). We focus on $Y_S$ and $Y_U$
separately, and break up the analysis into a series of claims.

\begin{claim} \label{clm:YU} $\Pr[|Y_U - h(U)| \geq \eps m/2] \leq 0.01$.
\end{claim}

\begin{proof}  
	This follows by applying the trivial upper bounds of $\deg(U) \leq n\davg$ and $h(U) \leq m$ to the RHS of 
	\Lem{estH}, to give
$$ \Pr[|Y_U - h(U)| \geq \frac{\eps m}{2}] \leq \frac{4 n \davg m}{b\eps^2 m^2} = \frac{4\mu}{b\eps^2} $$
The choice of $b = \ceil{C_1 \eps^{-2} \sqrt{n}}$ and
the fact $\sqrt{n} \geq \mu$ ensures this probability
is at most $0.01$.
\end{proof}

We now state the crucial property that $U$ contains
all vertices $x$ that have large $h(x)$ value.

\begin{claim} \label{clm:U} With probability at least $1-1/n$,
for all vertices $x \notin U$, $h(x) \leq m/\sqrt{n}$.
\end{claim}

\begin{proof}
Consider a vertex $x$.
It is added to $U$ in the $i$th iteration, if the 
uar edge $e_i$ contains $x$, and $x$ is sampled through
the $\IncV(e_i)$ query.
The probability of this event is exactly $m^{-1} \sum_{e \ni x} 1/|e| = h(x)/m$. 
For any $x$ with $h(x) > m/\sqrt{n}$, the probability 
that $x$ is sampled in an iteration is $\geq 1/\sqrt{n}$. The probability that none of the $q$ independent samples places it in $U$ is $\leq (1 - 1/\sqrt{n})^q \leq 1/n^2$ for a suitably large constant $C_1$. Taking a union bound over all $x$,
with probability at least $1/n$, all $x$ such
that $h(x) > m/\sqrt{n}$ are present in $U$.
\end{proof}

For convenience, define $\tilde{h}(x)$ to be $h(x)$
if $x \notin U$ and zero otherwise. 
Observe that $m = h(U) + \tilde{h}(V)$.
By \Clm{U}, $\max_x \tilde{h}(x) \leq m/\sqrt{n}$.
This bound plays a critical role in the following bound
on $h(S)$.

\begin{claim} \label{clm:S} $\Pr[|(n/b) h(S) - \tilde{h}(V)| \geq \eps m/4] \leq 0.01$.
\end{claim}

\begin{proof}
It is convenient to define $\tilde{S}$ as the sample
set $\{s_1, s_2, \ldots, s_b\}$. Note that $S = \tilde{S} \setminus U$ and $\tilde{h}(S) = h(S)$.
Observe that $\Exp[\tilde{h}(s_i)] = \tilde{h}(V)/n$
and $\Var[\tilde{h}(s_i)] \leq \max \tilde{h}(s_i) \Exp[\tilde{h}(s_i)]$ $\leq (m/\sqrt{n}) \tilde{h}(V)/n$. 
The bound on $\tilde{h}(x)$ bounds this variance.

Thus, $\Exp[\tilde{h}(S)] = \tilde{h}(V)b/n$
and $\Var[\tilde{h}(S)] \leq (m/\sqrt{n}) \tilde{h}(V) b/n$.
Applying Chebyshev's inequality and noting that $\tilde{h}(V) \leq m$,
\begin{equation*}
\Pr[|\tilde{h}(S) - \tilde{h}(V)b/n| \geq \eps mb/4n]
\leq \frac{16(m/\sqrt{n}) \tilde{h}(V) b/n}{\eps^2 b^2 m^2/n^2}
\leq \frac{16m^2 \sqrt{n}}{b\eps^2 m^2} = \frac{16\sqrt{n}}{b\eps^2} \leq 0.01
\end{equation*}
Thus, $\Pr[|(n/b) h(S) - \tilde{h}(V)| \geq \eps m/4] = \Pr[|\tilde{h}(S) - \tilde{h}(V)b/n| \geq \eps mb/4n] \leq 0.01$.
\end{proof}

We have the tools to analyze $Y_S$.

\begin{claim} \label{clm:YS} $\Pr[|(n/b) Y_S - \tilde{h}(V)| \geq \eps m/2] \leq 0.05$.
\end{claim}

\begin{proof}
Note that $\EX_S[\deg(S)] \leq |S| \davg \leq b \davg$ and $\EX_S[h(S)] \leq bm/n$.
By the Markov inequality and an union bound, with probability at least $0.98$,
$\deg(S) \leq 100 b\davg$ and $h(S) \leq 100 bm/n$. Let us condition on 
such a ``good" choice of $S$.
We apply \Lem{estH} and use $C$ to denote a sufficiently large constant.
\begin{equation*}
	\Pr[|Y_S - h(S)| \geq \eps (m/4)(b/n)] \leq \frac{16 \deg(S) h(S)}{ b \cdot \eps^2 b^2 m^2/n^2}
\leq \frac{C b^2 \davg m/n}{b^3 \eps^2 m^2/n^2} = \frac{C n \davg}{b \eps^2 m} = \frac{C \eavg}{b \eps^2} \leq 0.01
\end{equation*}
By multiplying with $(n/b)$, $\Pr[|(n/b)Y_S - (n/b)h(S)| \geq \eps m/4] \leq 0.01$,
for such a ``good" $S$. The probability of choosing a good $S$ is at leat $0.98$.
Applying the bound of \Clm{S} and a union bound, $\Pr[|(n/b) Y_S - \tilde{h}(V)| \geq \eps m/2] \leq 0.05$.
\end{proof}

We put the pieces together. Recall that $m = h(U) + \tilde{h}(V)$. The output
estimate is $Y_U + (n/b) Y_S$. So $|Y_U + (n/b) Y_S - m| \leq |Y_U - h(U)| + |(n/b) Y_S - \tilde{h}(V)|$. Applying \Clm{YU}, \Clm{YS}, and a union bound, 
$\Pr[|\left(Y_U + (n/b) Y_S\right) - m| \geq \eps m] \leq 0.1$.\qedhere

\end{proof}

\section{Lower Bounds}

\subsection{Estimating Average Degree and Average Size}\label{sec:lb-est-avgd}

\begin{theorem}\label{thm:lb-est-avgd}
	For any constant $c$, any algorithm returning an estimate $\davg \leq \widehat{\davg} \leq c\cdot \davg$ with probability $> 2/3$, 
	must make $\Omega(\sqrt{n}/c)$ many queries. 
	This holds even when $\davg$ is $O(\sqrt{n})$ and even when $\eavg = O(1)$.
\end{theorem}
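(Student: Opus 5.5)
We will prove the statement by a standard indistinguishability argument between two distributions over hypergraphs whose average degrees differ by a factor larger than $c$, but which no algorithm making $o(\sqrt{n}/c)$ queries can tell apart with probability better than $2/3$.

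\textbf{The two families.} Both families have $n$ vertices, and every hyperedge has size $2$ or $O(1)$, so $\eavg=O(1)$ in both.
\begin{itemize}
\item In $\mathcal{H}_1$, the hypergraph is a perfect matching of $2$-edges on all $n$ vertices. Then $\davg=1$ and $m=n/2$.
\item In $\mathcal{H}_2$, we take the same matching on $n-k$ vertices, with $k=\Theta(\sqrt{n}/c)$. The remaining $k$ ``hub'' vertices each lie in roughly $D=\Theta(c\sqrt{n})$ extra hyperedges. Each extra hyperedge is a pair consisting of a hub and a fresh low-degree partner. To keep things uniform, we give the partners degree $1$ in these extra edges and drop them from the matching.
\end{itemize}
We must be careful that the extra edges do not swamp $m$. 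Otherwise $\RE$ would hit hubs too often. Suppose the hubs have total extra degree $kD$. Then $\davg$ is roughly $(n+2kD)/n$, while the fraction of hyperedges touching a hub is $kD/(m)$.

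To get a factor-$c$ gap in $\davg$ we need $kD\approx cn$. But then $m$ also grows to about $cn$, and a random edge hits a hub with constant probability. That is bad. The fix is to exploit parallel hyperedges (multi-edges are allowed by the paper's remark) and to make the hub edges large in the \emph{degree} sense while few in \emph{number}. Concretely, let the hub set $T$ have size $k$, and add $cn/k$ copies of the single hyperedge $T$ itself. Then:
\begin{itemize}
\item each hub has degree about $cn/k$;
\item the sum of degrees rises by $cn$, so $\davg$ goes from $1$ to about $1+c$;
\item the number of added hyperedges is $cn/k$, each of arity $k$.
\end{itemize}
A random hyperedge lands on a $T$-copy with probability about $(cn/k)/(n/2+cn/k)$, which is $\Theta(1)$ unless $k\gg c\sqrt{n}$. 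We therefore choose the parameters differently.

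\textbf{Choice of parameters.} The clean choice, which I will verify, is to spread the degree across hubs using arity-$2$ pairs among hubs only. Take $T$ to be a set of $k=\Theta(\sqrt{cn})$ hubs, and place a multigraph on $T$ in which every hub has degree $cn/k$. Then:
\begin{itemize}
\item there are $cn/2$ such edges, so $m\approx (1+c)n/2$ and $\eavg=2$;
\item $\davg\approx 1+c$, versus $1$ in $\mathcal{H}_1$.
\end{itemize}
The problem is that $\RE$ now hits hub edges with probability about $c/(1+c)$, which distinguishes the families at once. To block this, we also pad $\mathcal{H}_1$ with the same number of edges, placed so that they do not raise $\davg$ much. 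This is impossible with arity $O(1)$ unless the padding is also concentrated.

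So the intended construction must be different. Both instances will contain $cn/2$ hub edges; they differ in whether these edges live on $\sqrt{n}$ vertices or on $c\sqrt{n}$ vertices. Degree sums are equal in that case, however, so this does not separate $\davg$ either.

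\textbf{Correct route.} The construction I will pursue makes the instances differ in $n$-mass rather than edge counts. The distinguishing feature must be a small hidden vertex set of size $\Theta(\sqrt{n}/c)$ that carries an $\Omega(c)$ fraction of the degree sum, while being hit by $\RE$ with probability only $O(1/\sqrt{n})$. This is achievable with arity $O(1)$ if most hyperedges avoid the hubs and the hub-incident edges are few but the hubs have large degree. That requires each hub edge to have large multiplicity with respect to hubs, which contradicts arity $O(1)$.

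I therefore expect the main obstacle to be reconciling the condition $\eavg=O(1)$ with hiding the hubs from $\RE$. I would resolve it by letting the hub-incident edges be a $\Theta(1/\sqrt{n})$ fraction of all edges and taking $m=\Theta(n^{3/2})$. Concretely:
\begin{itemize}
\item the non-hub vertices form a dense $O(1)$-arity structure with degree about $\sqrt{n}$ each, consistent with $\davg=O(\sqrt{n})$;
\item the $\sqrt{n}/c$ hubs each have degree about $cn$, adding about $n^{3/2}$ to the degree sum, a constant factor more than the base $n^{3/2}$.
\end{itemize}
Under this construction:
\begin{itemize}
\item $\RV$ hits a hub with probability $1/(c\sqrt{n})$;
\item $\RE$ hits a hub edge with probability $\Theta(1)$, since the hub edges number about $n^{3/2}$.
\end{itemize}
This is still too high, so the hub edges must be made up mostly of high-degree non-hub partners, and the argument becomes a careful balancing.

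The final step is the coupling. Conditioned on no hub (or hub edge) being revealed, the transcripts in the two families are identically distributed. A union bound over queries then shows that $o(\sqrt{n}/c)$ queries reveal a hub with probability below $1/3$.
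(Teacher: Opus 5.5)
\textbf{There is a genuine gap: you never exhibit a pair of distributions that works.} Each candidate is abandoned, and the last one is conceded to fail (``This is still too high\ldots''), so your closing coupling paragraph has nothing to apply to.

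The root cause is your opening restriction that every hyperedge has arity $O(1)$. This is much stronger than $\eavg=O(1)$, and under it the argument cannot succeed. If both families have all arities at most $k$, a distinguisher can sample vertices proportionally to degree: draw $e$ by $\RE$, take a uniform $x\in e$ by $\IncV$, and accept with probability $|e|/k$, which succeeds with rate $\eavg/k\ge 1/k$. It then runs a proportional-sampling sum estimator and separates the families with $O(kn^{1/3})$ queries, as the paper remarks in \Cref{sec:pers}. So the obstruction you kept hitting (``contradicts arity $O(1)$'') is real. It cannot be removed by balancing parameters.

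\textbf{The missing idea} is that $\eavg$ is an average over hyperedges. An $O(1/\sqrt n)$ fraction of hyperedges of arity $\Theta(c\sqrt n)$ keeps $\eavg=O(c)$ while carrying a constant-factor share of the degree sum.

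Your discarded ``$cn/k$ copies of $T$'' construction does exactly this once the arithmetic is fixed. The fraction of hyperedges that are copies of $T$ is $(cn/k)/(n/2+cn/k)\approx 2c/k$. For a uniformly random $T$ of size $k=\Theta(c\sqrt n)$ this is $\Theta(1/\sqrt n)$, not $\Theta(1)$. With this choice:
\begin{itemize}
\item $\RV$ lands in $T$ with probability $k/n=\Theta(c/\sqrt n)$;
\item a random matching edge meets $T$ with probability at most $2k/n$;
\item $\davg$ rises from $1$ to $1+c$;
\item $\eavg\approx 2(1+c)$.
\end{itemize}
Off these per-query $O(c/\sqrt n)$-probability events, all $\DEG$, $\IncE$, $\IncV$ answers coincide. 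Hence $o(\sqrt n/c)$ queries cannot tell the families apart.

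This is essentially the paper's proof. It hides a random set $A$ of $2c\sqrt n$ vertices and puts a common graph of average degree $d\le\sqrt n$ on $V\setminus A$. On $A$ it places either a graph of average degree $d$ or $d\sqrt n$ hyperedges of arity $c\sqrt n$. Keeping $A$'s edges inside $A$ also removes your matching-edge case.
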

\begin{proof}
	For contradiction's sake, suppose there is an algorithm $\cA$ that returns a $c$-approximation to the average degree of any hypergraph $H$ with probability $> 2/3$ and makes $q := o(\sqrt{n}/c)$-many queries.
	
	For simplicity, fix a perfect square $n$.
	We describe two (coupled) distributions, $\calD_1$ and $\calD_2$, of hypergraphs on $n$ vertices $V$, with the following properties: (i) for any $H_1 \in \supp(D_1)$ and $H_2 \in \supp(D_2)$,
	we have $\davg(H_2)/\davg(H_1) > c$, but (ii) $\cA$ cannot distinguish whether the hypergraph $H$ it is making queries on has been drawn from $\cD_1$ or $\cD_2$, with probability $> 2/3$. 
	This would complete the proof.

Fix a parameter $d \leq \sqrt{n}$. We first sample a random subset $A\subseteq V$ of $2c\sqrt{n}$ vertices. 
On the vertices $V\setminus A$, we fix a standard graph $H_0$ of average degree $d$ on $n-2c\sqrt{n}$ vertices.
We also fix a standard graph $H'$ of average degree $d$ on $2c\sqrt{n}$ vertices. Finally, we also fix a collection of $k:= d\sqrt{n}$ 
hyperedges each of size $|e| = c\sqrt{n}$. Call this hypergraph $H''$.
To describe an $H_1 \sim \cD_1$, we simply output $H_0[V\setminus A] \cup H'[A]$, where $H_0[V\setminus A]$ is the graph obtained using a random mapping between the labels of $V(H_0)$ and $V\setminus A$.
Similarly, define $H'[A]$. To describe $H_2 \sim D_2$, we output $H_0[V\setminus A] \cup H''[A]$, where we couple $H_1$ and $H_2$ using the same (i) choice of $A$, and (ii) random-mapping between $V(H_0)$ and $V\setminus A$. 

By design, note that for $(H_1, H_2)$ drawn as above, we have $\davg(H_1) = d$ while
\[
\text{for any $H\in \supp(\calD_2)$}, ~\davg(H) = \frac{1}{n} \cdot \left(d(n - |A|) + (d\sqrt{n}) \cdot c\sqrt{n} \right) \approx (c+1)d
\]
We also note that $m(H_0) = \frac{nd}{2}\cdot (1-o(1))$ and $m(H'), m(H'') \leq d\sqrt{n}$.

Since $\davg(H_2)/\davg(H_1) > c$, the $c$-approximation algorithm should distinguish $H_1$ and $H_2$ with probability $\ge 2/3$. 
We now show this is impossible.

	Let $A$ be the random subset that is shared between $(H_1, H_2)$. 
	Note that unless the algorithm obtains vertex $x\in A$, we have $\DEG(x)$, $\IncE(x)$ the {\em same} whether we call on $H_1$ or $H_2$.
	Similarly, $\IncV(e)$ is the {\em same} unless $e\subseteq A$.
	Let $\calE$ be the event that none of the algorithm's $\leq q$-many $\RV()$ calls return a vertex in $A$, and none of the $\leq q$-many $\RE()$ calls return an $e\subseteq A$.
	Then, the total variational distance between the transcripts (answers to queries), $\Pi_1$ and $\Pi_2$, when the hypergraph is $H_1$ and $H_2$ is
	\[
		d_{\mathsf{TV}} (\Pi_1, \Pi_2) \leq \Pr[\neg \calE]~~~\Rightarrow~~\Pr[\cA~\text{can distinguish $H_1$ and $H_2$}] \leq \frac{1}{2}\cdot \left(1 + \Pr[\neg \calE]\right)
	\]
	
	We now show $\Pr[\calE] \ge 1 - o(1)$ which proves the claim.
	Fix any $q$ queries to $\RV$ with $q = o(\sqrt{n}/c)$.
	The probability a $\RV$-query returns a vertex in $A$ is $|A|/n = \frac{2c}{\sqrt{n}}$. The probability none of the $q$-queries give this is $(1 - \frac{2c}{\sqrt{n}})^q \geq 1 - o(1)$ if $q = o(\sqrt{n}/c)$.
	Fix any $q$ queries to $\RE$. The probability a $\RE$ query returns an edges in $H'$ or $H''$ is at most $m(H')/m(H) \leq (2 + o(1))/\sqrt{n}$. 
	A similar calculation in the previous line completes the contradiction.

Note that the above bound holds even if the algorithm was promised that the average degree was $\geq d$, for whichever parameter $d$. In particular, it rules out $O(\sqrt{n/d})$-query $O(1)$-approximation algorithms for 
estimating average degree. This is in contrast to the graph case where one can obtain $\Ot_\eps(\sqrt{n/\davg})$-query $(1\pm \eps)$-estimation even with $\RV, \DEG$ and $\IncE$ queries.

Furthermore, we also note that for every $H\in \supp(\calD_1)$, we have $\eavg(H_1) = 2$, while for every
\[
H\in \supp(\calD_2), ~\text{we have}, ~~\eavg(H) = \frac{ d(n-2c\sqrt{n}) + c\sqrt{n} \cdot d\sqrt{n}}{d(n-2c\sqrt{n})/2 + d\sqrt{n}} \approx 2c
\]

This implies two things. (1) Obtaining a constant approximation for $\eavg$ also requires $\Omega(\sqrt{n})$ queries, even when it is guaranteed to be $O(1)$
	(2) Obtaining a constant approximation for $\davg$ requires $\Omega(\sqrt{n})$ queries, even when $\eavg$ is $O(1)$. On the other hand, our result in~\Cref{sec:est-m-bnd-mu} shows that it is 
	possible to obtain an $(1\pm \eps)$-estimate to $m$, the number of hyperedges, in $\Ot_\eps(n^{1/3})$-many queries if $\eavg = O(1)$.
	These observations together complete the whole proof.
\end{proof}

\subsection{Estimating Number of Hyperedges}\label{sec:lb-est-m}

\begin{theorem}\label{thm:lb-est-m}
For any constant $c$, any algorithm returning an estimate $m \leq \widehat{m} \leq c\cdot m$ with probability $> 2/3$, 
must make $\Omega(\sqrt{n}/c)$ many queries. 
\end{theorem}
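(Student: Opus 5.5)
We will argue exactly as in the proof of \Thm{lb-est-avgd}: build two coupled distributions $\calD_1,\calD_2$ over hypergraphs on $n$ vertices. They agree outside a hidden random set $A$ of $\Theta(c\sqrt{n})$ vertices. The number of hyperedges differs by more than a factor $c$. Yet with $q=o(\sqrt{n}/c)$ queries, the transcript is almost surely identical. By Yao-style reasoning as before, $d_{\mathsf{TV}}(\Pi_1,\Pi_2)\le \Pr[\neg\calE]=o(1)$, so $\cA$ cannot succeed with probability $>2/3$.

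\textbf{Construction.} Take a perfect square $n$ and a uniformly random $A\subseteq V$ with $|A|=a:=2c\sqrt{n}$. On $V\setminus A$ place a randomly relabeled fixed hypergraph $H_0$ with few edges, say a perfect matching (so $m(H_0)\approx n/2$). In $\calD_1$ put a small hypergraph $H'$ on $A$, e.g.\ a matching on $A$, giving $m(H_1)\approx n/2$. In $\calD_2$ put on $A$ a hypergraph $H''$ with roughly $2cn$ hyperedges, which is many more than $|A|$. This is possible since $A$ has $2^{a}$ subsets. Then $m(H_2)\ge (2c+1/2)n > c\cdot m(H_1)$.

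\textbf{Indistinguishability.} The difficulty is that a $\RE()$ query now lands in $A$ with probability $\approx 2cn/((2c+1/2)n)$, which is constant. So hyperedges inside $A$ cannot simply be hidden. The fix is to make $H''$ locally look like $H_0$ and to force the algorithm to detect $A$ through vertices rather than edges. I will first try making every hyperedge huge and similar everywhere. For instance, let all hyperedges in both distributions have size $\Theta(\sqrt{n})$ or span both sides. An alternative plan is to take the hyperedges of $H''$ of the form $\{x\}\cup T$ with $x\in A$, so that sampled edges look generic.

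A cleaner route uses \Thm{lb-est-avgd} together with the identity $m=n\davg/\eavg$. In the construction of \Thm{lb-est-avgd}, $\davg$ changes by a factor $c$ while $m$ barely changes. Dually, I want $m$ to change while the visible degree and arity statistics stay fixed. So I will let the differing part consist of $k$ parallel copies of a single hyperedge on $A$ (parallel edges are allowed per the footnote). These would be detectable by $\RE$, so instead I will put the extra mass on many distinct sets containing a hidden vertex. This works as long as a random edge sample and incidence queries reveal a vertex of $A$ only with probability $O(c/\sqrt{n})$.

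\textbf{Main obstacle.} The key step, and the one I expect to be hardest, is reconciling two demands on $H''$. It must contain $>c\,m(H_1)$ edges, so random edges hit it with constant probability. Yet sampling such an edge, querying its $\SIZE$, and making $\IncV$ queries must be distributed identically to $\calD_1$ until a vertex of $A$ or a degree query on $A$ occurs. My first attempt will be to make all hyperedges in both distributions large sets $e=B\cup\{x\}$, where $B\subseteq V\setminus A$ is a common random block of size $\Theta(n)$. With $\IncV$ returning uar members, these look identical unless an incidence query hits $A$, which happens with probability $O(1/n)$ per query. Only the vertex $x$ differs, being in $A$ in $\calD_2$. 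I will then verify that $\DEG$ on vertices outside $A$ agrees, by balancing degrees on $B$, and union bound over $q$ queries exactly as in \Thm{lb-est-avgd}.
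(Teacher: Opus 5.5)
Your proposal has a genuine gap. It never commits to a construction, and the one concrete candidate you end with cannot work.

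\textbf{Why the common-block candidate fails.} Suppose the hyperedges have the form $B\cup\{x\}$ with a common block $B$ of size $\Theta(n)$. Then each $v\in B$ lies in all $N$ such hyperedges of $H_2$, so $\deg_{H_2}(v)\ge N$. A single $\RV()$ query lands in $B$ with constant probability, so degrees on $B$ must agree in the two distributions. In your plan every other hyperedge of $H_2$ is shared with $H_1$, so
$m(H_1)\ge \deg_{H_1}(v)=\deg_{H_2}(v)\ge N\ge m(H_2)-m(H_1)$. Hence $m(H_2)\le 2m(H_1)$. ``Balancing degrees on $B$'' therefore destroys the factor-$c$ gap, and not balancing them lets one degree query reveal it.

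\textbf{Why the template itself fails.} As you observe, $m(H_2)>c\,m(H_1)$ forces a constant fraction of $\RE()$ answers to be hyperedges that do not exist in $H_1$. So the coupling of \Thm{lb-est-avgd}, where transcripts are identical unless a hidden set $A$ is touched, is unavailable. Those hyperedges cannot be hidden. They must be statistically indistinguishable from what $\RE()$ returns in $H_1$, while all degrees stay equal. Your other variants ($\{x\}\cup T$ with $x\in A$, many sets through a hidden vertex) hit the same tension. Pushing the extra hyperedges' mass outside $A$ to fool $\IncV$ changes degrees outside $A$, which breaks ``agree outside $A$.''

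\textbf{The missing idea: global symmetry, no hidden set.}
\begin{itemize}
\item Both hypergraphs contain $\sqrt n-c$ common random equipartitions $\{S_i,V\setminus S_i\}$, i.e.\ hyperedges of size $n/2$.
\item $H_1$ adds one random perfect matching and $c-1$ further equipartitions. $H_2$ adds $c$ random disjoint perfect matchings.
\item Every vertex has degree exactly $\sqrt n$ in both, so $\DEG$ reveals nothing. Yet $m(H_1)\approx n/2$ and $m(H_2)\approx cn/2$.
\item An $\IncE(x)$ query returns a non-common hyperedge with probability only $c/\sqrt n$, so a union bound gives $cq/\sqrt n$ overall.
\item $\RE()$ returns a $2$-edge except with probability $O(1/\sqrt n)$. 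By a birthday bound, $q=o(\sqrt n)$ sampled $2$-edges are pairwise vertex-disjoint with probability $1-o(1)$. Conditioned on this, they are uniformly random disjoint pairs under both distributions.
\end{itemize}
The few huge hyperedges pay for equal degrees while almost never being seen. The extra $2$-edges are sampled constantly but are invisible until a collision.
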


\begin{proof}
	The set up of the proof is similar to the one in the previous section. 
	For contradiction's sake, suppose there is an algorithm $\cA$ which with probability $> 2/3$ returns a $c$-approximation to the average degree of any hypergraph $H$ and makes $q := o(\sqrt{n}/c)$-many queries.
	Fix an even square number $n$.
	We describe two (coupled) distributions, $\calD_1$ and $\calD_2$, of hypergraphs on $n$ vertices $V$, with the following properties: (i) for any $H_1 \in \supp(D_1)$ and $H_2 \in \supp(D_2)$,
	we have $m(H_2)/m(H_1) > c$, but (ii) $\cA$ cannot distinguish whether the hypergraph $H$ it is making queries on has been drawn from $\cD_1$ or $\cD_2$, with probability $> 2/3$. 
	This would complete the proof.

We now describe how to obtain $(H_1, H_2)$ in the coupled distribution.
First, we sample $\sqrt{n}-c$ many {\em equipartitions} $(S_i, V\setminus S_i)$ with $|S_i| = n/2$, for $1\leq i\leq \sqrt{n}-c$, and add $S_i$ and $V\setminus S_i$
as hyperedges. Let us call these edges $E_0$ and these are common to $H_1$ and $H_2$.
Next, to $H_1$, we (i) add a randomly sampled perfect matching of simple $2$-edges, and (ii) sample another $c-1$ equipartitions $(S_j, V\setminus S_j)$ with $|S_j| = n/2$, for $1\leq j\leq c-1$, and add $S_j$ and $V\setminus S_j$ to the hyperedge set. Let these edges be $E_1$.
To obtain $H_2 \in \calD_2$, we instead sample $c$ random disjoint perfect-matchings 
and add these edges to the set of hyperedges. Let these edges be $E_2$. 

Note that the number of edges are
\[
m(H_1) = \frac{n}{2} + 2(\sqrt{n}-1) = \frac{n}{2}\cdot (1+o(1))~~~\text{and}~~~m(H_2) = \frac{cn}{2} + 2(\sqrt{n} - c) = \frac{cn}{2}\cdot (1+o(1))
\]
and that $\deg(x) = \sqrt{n}$ for all vertices in both hypergraphs.
Since the number of edges differ by a factor of $c$, the algorithm $\cA$ should be able to distinguish between $H_1$ and $H_2$ with probability $> 2/3$.
We now show this is not possible.

As in the previous proof, we consider the transcripts $(\Pi_1, \Pi_2)$ of the queries to $(H_1, H_2)$, and identify events such that conditioned on them occurring, the transcripts are the same.
Let us first consider vertex-facing queries. $\DEG(x)$ are all the same in $H_1$ and $H_2$, and this is with probability $1$.
For a vertex $x$, we let $\calE_x$ denote the event that all $\IncE(x)$-queries made by $\cA$ returns an edge in $E_0$; we let $\calE$ denote the event $\bigcup_{x\in V} \calE_x$.
Note that $\calE$ implies that the transcript of $\IncE(x)$-queries are same for $H_1$ and $H_2$. 

Since the number of edges of $E_0$ incident to $x$ is $\sqrt{n} - c$, and $\deg(x) = \sqrt{n}$, a single $\IncE(x)$ query returns an edge in $E_0$ with probability $1 - \frac{c}{\sqrt{n}}$.
Therefore, $\Pr[\calE_x] = \left(1 - \frac{c}{\sqrt{n}}\right)^{q(x)} \geq 1 - \frac{cq(x)}{\sqrt{n}}$, where $q(x)$ is the number of $\IncE(x)$ queries.
By union bound, $\Pr[\calE] \geq 1 - \sum_{x\in V} \frac{cq(x)}{\sqrt{n}} \geq 1 - \frac{cq}{\sqrt{n}}$, and this is $1-o(1)$, if $q = o(\sqrt{n}/c)$.

Let us now consider the edge-facing queries. Let us assume the algorithm makes $q$ many $\RE$ calls, and let us assume whenever you make such a call, the algorithm also gets $|e|$ and all the vertices in $e$ for free (so that we don't have to worry about $\SIZE$ and $\IncV$ queries).
Let $\calF$ be the event that (i) none of the $\RE$ calls gives a large edge (that is, not a $2$-edge), and (ii) the $2q$ vertices participating in the $q$ simple edges are all distinct, irrespective of the input being $H_1$ or $H_2$.
Conditioned on $\calF$, the transcript of the answers to the $\RE$ queries have the same distribution in $H_1$ and $H_2$: by symmetry, it is {\em uniform} over all $q$-length tuples of pairs of vertices with all $2q$ distinct vertices. Therefore, we get that
\[
d_{\mathsf{TV}} (\Pi_1, \Pi_2) \leq \Pr[\neg \calE] + \Pr[\neg \calF]\Rightarrow~~\Pr[\cA~\text{can distinguish $H_1$ and $H_2$}] \leq \frac{1}{2}\cdot \left(1 + \Pr[\neg \calE] + \Pr[\neg \calF]\right)
\]
We have already proved the first term is $o(1)$.
All that remains is to prove $\Pr[\calF] = 1-o(1)$. We do so next and this completes the proof.

First, we show that in $H_1$ we only get $2$-edges with probability $1-o(1)$. Since we have $n/2$ total $2$-edges and $2(\sqrt{n}-1)$ big edges, the probability of getting a big edge is $\frac{2(\sqrt{n}-1)}{n/2 + 2(\sqrt{n}-1)} < \frac{2\sqrt{n}}{n/2} = \frac{4}{\sqrt{n}}$. The probability of getting a small edge is $1-\frac{2(\sqrt{n}-1)}{n/2 + 2(\sqrt{n}-1)} > 1-\frac{4}{\sqrt{n}}$. Thus, the probability of getting all small edges in $q$-queries of $\RE$ is at least $(1-\frac{4}{\sqrt{n}})^q \geq 1 - o(1)$ if $q = o(\sqrt{n}/c)$.
Note that $H_2$ has more $2$-edges and fewer big edges, so the probability of all getting $2$-edges is higher and thus also $\geq 1 - o(1)$.

Next, we assume that all sampled edges are $2$-edges since this is the case with probability $1-o(1)$. Now we want to show that the $2q$ vertices in these $q$ $2$-edges are all distinct with probability $1 - o(1)$. 
This calculation is a simple (reverse direction of) ``birthday paradox'' style calculation. Let $\calF_j$ denote the event that the $j$th random edge 
does not intersect the previous $(j-1)$ edges. 

For $H_1$, we see that
\[
\Pr[\calF] = \Pr[\wedge_{j=1}^q \calF_j ] = \prod_{j=1}^q \Pr[\calF_j~|\calF_1, \ldots, \calF_{j-1}] \geq \prod_{j=1}^q \frac{n/2-j}{n/2} \geq 1 - \frac{q^2}{n/2} = 1-o(1).
\]

For $H_2$ we see that
\[
\Pr[\calF] = \Pr[\wedge_{j=1}^q \calF_j ] = \prod_{j=1}^q \Pr[\calF_j~|\calF_1, \ldots, \calF_{j-1}] \geq \prod_{j=1}^q \frac{cn-2cj}{cn} \geq 1 - \frac{2q^2}{n} = 1-o(1)
\]
where the first inequality follows since each $2$-edge can intersect at most $2c$ edges from other matchings, and thus $j$ different $2$-edges can have at most $2cj$ many $2$-edges intersecting them.
\end{proof}

\section{Estimating $m$ with promised upper bound on $\eavg$}\label{sec:est-m-bnd-mu}
Suppose we are given a value $\eavg_0$ satisfying $\eavg \leq \eavg_0$. We also assume $\eps < 1/2$.
We now show an $\Ot_\eps(\eavg_0\cdot n^{1/3})$-query $(1\pm \eps)$-estimate using 
Beretta-T\v{e}tek~\cite{BT24} algorithm, which is henceforth abbreviated as BT. We recall their result: given any non-negative vector $w \in \RR^n_{\geq 0}$
such that one can sample a uniform entry of $w$ and an entry of $w$ with probability proportional to its entry, then there is a randomized algorithm that outputs an $(1\pm \eps)$-estimate
to $W := \sum_{x\in n} w(x)$ with high probability making $O(n^{1/3}\eps^{-4/3})$ many queries.

\smallskip

\begin{theorem}\label{thm:est-m-bnd-mu}
	Given $\eavg_0 \geq \eavg$, there is a randomized algorithm which returns an $(1\pm \eps)$-estimate to $m$ with high probability making 
	$O(\eps^{-13/3}\eavg_0\cdot n^{1/3}\log n)$-many dual access oracle queries.
\end{theorem}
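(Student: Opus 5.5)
We will apply the BT estimator to the vector $w = h$ of fractional degrees, since $\sum_{x\in V} h(x) = m$ by \eqref{eq:h-to-m}. BT needs two primitives. The first is a uniform index $x\in V$, which $\RV()$ supplies directly. The second is an index drawn proportionally to $h(x)$, which comes from one $\RE()$ followed by one uniform $\IncV$ query (observation (ii) in \Cref{sec:ideas}). The difficulty is that BT also needs the \emph{value} $w(x)=h(x)$ at each sampled index, and this costs $\deg(x)$ queries to compute exactly. The plan is to replace each exact value by an unbiased estimate from \estH$(\{x\},b)$, and to charge the extra variance to $\eavg_0$.

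The steps, in order, are as follows.

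First, we recall the structure of the BT harmonic-type estimator. It uses $O(n^{1/3}\eps^{-4/3})$ uniform samples and proportional samples. It then combines quantities of the form $w(x)$ (for uniform samples) and $1/w(x)$ or indicator thresholds on $w(x)$ (for proportional samples). We would use a BT-style variant that only needs $w(x)$ up to a $(1\pm\eps')$ multiplicative factor, with $\eps'=\Theta(\eps)$. A $(1\pm\eps')$ perturbation of every entry changes the sum by at most a $(1\pm\eps')$ factor, and inverse/threshold statistics tolerate it in the same way.

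Second, we obtain a $(1\pm\eps')$ estimate of $h(x)$ for a single vertex $x$. By \Lem{estH} with $S=\{x\}$, Chebyshev with $t=\eps' h(x)$ gives failure probability $\deg(x)/(b\,\eps'^2 h(x))$. Now $\deg(x)/h(x)$ is the harmonic-type mean arity of the edges at $x$, which can be large for an individual $x$. This is where $\eavg_0$ enters. The $h$-weighted average of $\deg(x)/h(x)$ is $\frac1m\sum_x \deg(x) = n\davg/m = \eavg\le\eavg_0$. For proportional samples, Markov's inequality then shows that $\deg(x)/h(x)\le \eavg_0/\eps''$ except with probability $\eps''$. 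Such an $x$ is therefore estimable with $b=O(\eavg_0\eps^{-3}\log n)$ queries, using median amplification to drive the failure probability to $1/\mathrm{poly}(n)$ so that we can union bound over all $O(n^{1/3}\eps^{-4/3})$ samples.

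Third, we handle the vertices where $\deg(x)/h(x)$ is large, through a rule for which quantities BT actually needs. For proportional samples an $\eps''$ fraction of bad vertices perturbs the proportional-sample statistics by only $O(\eps'')$ relative error, which is acceptable. For uniform samples we cannot rely on Markov directly. Instead, a uniform sample's contribution is $h(x)$, and we bound the error of the estimate by its second moment. The variance $\deg(x)h(x)/b$ summed over uniform $x$ is, in expectation, $\frac{1}{n}\sum_x \deg(x)h(x)/b\le \frac{\max h}{b}\cdot\davg$, which BT's truncation of heavy entries already controls. Alternatively we can simply feed BT unbiased noisy values and absorb the extra variance into its variance budget. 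We will budget $\eps''=\Theta(\eps)$, which gives the total $O(n^{1/3}\eps^{-4/3})\cdot O(\eavg_0\eps^{-3}\log n)=O(\eps^{-13/3}\eavg_0 n^{1/3}\log n)$ queries, matching the claim.

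The main obstacle is the third step: making BT robust to noisy, occasionally badly wrong, values of $h$. This requires opening the BT analysis rather than using it as a black box. I would first try to show that BT remains correct when each queried value is replaced by a $(1\pm\eps)$ approximation, except on a set of proportional-mass at most $\eps$, where the value may be arbitrary but is capped. I would check this against their threshold-based analysis. If that fails, I would fall back to an MPX-style estimator: heavy-$h$ entries are detected by proportional sampling, and light entries by uniform sampling. This estimator is more transparently robust to multiplicative noise, at the cost of a worse $\eps$-dependence.
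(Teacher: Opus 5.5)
\textbf{Gap: the third step is left open.} Your first two steps are sound and use the paper's ingredients. The Markov bound on the $h$-weighted distribution of $\deg(x)/h(x)$ is the same as the paper's observation that vertices with $h(x)\le \frac{\eps}{\eavg_0}\deg(x)$ carry total $h$-mass at most $\frac{\eps}{\eavg_0}n\davg=\frac{\eps}{\eavg_0}m\eavg\le \eps m$. The per-vertex budget $O(\eavg_0\eps^{-3}\log n)$ is also right.

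The problem is that you have no way to recognize the high-ratio vertices. On those vertices \estH$(\{x\},b)$ can return arbitrarily wrong values, including values near $0$, which is fatal for the harmonic-type statistics $1/w(x)$ you say BT uses. Up to an $\eps''$ fraction of the $\Theta(n^{1/3}\eps^{-4/3})$ proportional samples can be such vertices. That is far more than one, so they cannot be union-bounded away. Several of your claims are unverified assertions about BT's internals:
\begin{itemize}
\item that these samples perturb the statistics by ``only $O(\eps'')$ relative error'';
\item that the uniform-side noise is ``already controlled by BT's truncation'';
\item the ``capped'' fallback, which needs a cap you do not know.
\end{itemize}
As written, the argument rests on a robustness property of BT that you have not established, as you acknowledge yourself.

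\textbf{Missing idea: badness is testable.} The threshold is relative to $\deg(x)$, which costs one query, so you can detect and discard bad vertices instead of tolerating them. Run $L=\Theta(\log n)$ batches, each averaging $q=\Theta(\eavg_0\eps^{-3})$ samples $\deg(x)/|e_i|$. Declare $x$ bad if at least $L/3$ batch means fall below $\frac{\eps}{2\eavg_0}\deg(x)$; otherwise return the median. This works in each regime:
\begin{itemize}
\item If $h(x)\le \frac{\eps}{4\eavg_0}\deg(x)$, Markov puts each batch below the threshold with probability at least $1/2$, so $x$ is declared bad with high probability.
\item If $h(x)>\frac{\eps}{4\eavg_0}\deg(x)$, the batch variance is $\deg(x)h(x)/q\le \eps^2h(x)^2/C$, so the median is a $(1\pm\eps)$-estimate.
\item If $h(x)>\frac{\eps}{\eavg_0}\deg(x)$, the test does not fire with high probability.
\end{itemize}

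Now run BT as a black box on the truncated vector $h'$: set $h'(x)=0$ on declared-bad vertices and $h'(x)=\widehat{h}(x)$ otherwise. By your own Markov bound, $\sum_x h'(x)\ge(1-\eps)m$. Uniform samples see either $0$ or an accurate value. Proportional samples from $h'$ come from rejection sampling: draw $x\propto h$ via $\RE$ and $\IncV$, and reject if it is declared bad. This has $O(1)$ expected overhead, since the rejected mass is at most $\eps m$. The only property of BT still needed is tolerance to $(1\pm\eps)$-approximate values, which you already assume.
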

\begin{proof}
As noted in the proof of \Clm{U}, we can sample an $x\in V$ proportional to $h(x)$ using $O(1)$ queries. 
Therefore, if we could obtain $h(x)$, or even an $(1\pm \eps)$-estimate of it, for any $x$ in $O(q)$-queries, then we could simply apply BT to get an $O(qn^{1/3})$-query algorithm. 

Call a vertex $x\in V$ {\em bad} if $h(x) \leq \frac{\eps}{\eavg_0}\cdot \deg(x)$. Let $B$ be the set of bad vertices, and let $G := V\setminus B$.
An easy calculation shows
\begin{equation}\label{eq:bad}
	\sum_{x\in B} h(x) \leq \frac{\eps}{\eavg_0} \sum_{x\in B} \deg(x) \leq \frac{\eps}{\eavg_0} \cdot n\davg = \frac{\eps}{\eavg_0} \cdot m\eavg \leq \eps m
\end{equation}
This implies that we can ignore the contribution of bad vertices and only incur an $\eps m$ additive error. On the other hand, if $x\in G$, then we can estimate $h(x)$ up to $(1\pm \eps)$-factor 
in roughly $\frac{\eavg_0}{\eps^3}$-queries, as we show below. This can then be bootstrapped with BT.\smallskip

\begin{lemma}\label{lem:srt}
	There is a subroutine which
	given any $x\in V$, in $O(\frac{\eavg_0 \log n}{\eps^3})$-many queries, either returns (a) $x$ is bad, or (b) returns an $\widehat{h}(x)$ which is an $(1\pm \eps)$-estimate of $h(x)$. The error probability is $\leq 1/\poly(n)$.
\end{lemma}
\begin{proof}
	The algorithm runs $L := C\ceil{\log n}$ batches. In each batch $\ell$, it samples $q := \ceil{\frac{4C\eavg_0}{\eps^3}}$ many random edges $e_1, \ldots, e_q$ incident on $x$ and let 
	\[
	h_\ell(x) := \frac{1}{q} \sum_{i=1}^q \frac{\deg(x)}{|e_i|}
	\]
	Note that $\Exp[h_\ell(x)] = h(x)$, since $\deg(x)/|e|$ for a random $e$ incident on $x$ is an unbiased estimate of $h(x)$.
	If $\geq L/3$ different $h_\ell(x)$ happen to be $< \frac{\eps}{2\eavg_0}\cdot \deg(x)$, we assert $x$ is bad. Otherwise, we return
	\[
	\widehat{h}(x) = \mathsf{median}\left(h_\ell(x)~:~1\leq \ell \leq L\right)
	\]
	We now argue correctness for the three possible cases of $h(x)$.
	
	{\bf Case 1:} $h(x) \leq \frac{\eps}{4\eavg_0} \cdot \deg(x)$. In this case, $\Pr[h_\ell(x) \geq \frac{\eps}{2\eavg_0}\cdot \deg(x)] \leq \frac{1}{2}$, 
	and so we expect $\geq L/2$ of the $h_\ell(x)$'s to be $< \frac{\eps}{2\eavg_0}\cdot \deg(x)$. By Chernoff bounds, the probability we see $< L/3$ of them 
	is $\leq  1/\poly(n)$ for large enough $C$. So in this case, our algorithm returns BAD with all but $1/n^2$ probability. In case it doesn't return BAD, we don't make any guarantees, and charge it to the error probability.
	
	{\bf Case 2:} $\frac{\eps}{4\eavg_0} \cdot \deg(x) < h(x) \le \frac{\eps}{\eavg_0} \cdot \deg(x)$. In this case, our algorithm could assert $x$ is BAD, in which case it would be correct. If it does return an estimate, then we know the variance of $\deg(x)/|e|$ is $\leq \deg(x) h(x)$, so for
	each $h_\ell(x)$ batch, the variance is 
	\[
	\leq \frac{\deg(x) h(x)}{q} 
	\le
	\frac{\eps \cdot \deg(x)}{4 \eavg_0} \cdot \frac{h(x) \cdot \eps^2}{C}
	<
	\frac{h^2(x) \cdot \eps^2}{C}
	\]
	by the given that $\frac{\eps}{4\eavg_0} \cdot \deg(x) < h(x)$. Thus, $h_\ell(x)$ is a $(1\pm \eps)$-estimate to $h(x)$ with $\geq 5/6$ probability, and so the median of $C\log n$ iid trials is a $(1\pm \eps)$-estimate
	with $1 - 1/\poly(n)$ probability.
	
	{\bf Case 3:} $h(x) > \frac{\eps}{\eavg_0} \cdot \deg(x)$. In this case, we should not assert BAD; the Case 2 analysis holds for the estimate $\widehat{h}(x)$ in this case. Furthermore, as in Case 2, each $h_\ell(x)$ is in $(1\pm \eps)h(x)$ with $5/6$ probability. In particular, for any fixed $\ell$, the chance that $h_\ell(x) < \frac{\eps}{2\eavg_0}\cdot \deg(x)$ is at most $1/6$.
	So we expect to see $\leq L/6$ many $h_\ell(x)$'s; the probability this $> L/3$, by Chernoff bounds is $\leq 1/\poly(n)$.
\end{proof}

Given the low error probability, let us condition on the good event and henceforth assume the above subroutine never errs, and in particular, for every good vertex it returns an accurate estimate.
Our final algorithm runs BT on the implicit vector $h'(x)$ where $h'(x) = h(x)$ when $x\in G$ but $h'(x) = 0$ when $x\in B$.
\eqref{eq:bad} and our conditioning implies that $\sum_{x\in V} h'(x) \geq (1-\eps)m$. 

Whenever the BT algorithm wishes to obtain $h'(x)$, we run the algorithm in~\Cref{lem:srt}; if we are in case (a), we return $0$, otherwise we return $\widehat{h}(x)$. Whenever we wish to sample proportional to $h'(x)$, we sample proportional to $h(x)$, and then run the algorithm in~\Cref{lem:srt}.
If we obtain case (a), then we reject and go again; by \eqref{eq:bad}, we will get a sample $x$ proportional to $h'(x)$ in $O(1)$-expected time.
Thus, we can get a $(1\pm \eps)$-estimate of $\sum_{x\in V} h'(x)$ in $O(n^{1/3}\eps^{-4/3})$ many calls, where each call makes $O(\eps^{-3}\eavg_0\log n)$-many queries.
Overall this is the desired query complexity.
\end{proof}

\section{Conclusions and Future Directions} \label{sec:open} 

In this work, we introduced the dual access query model for arbitrary hypergraphs and considered the problem of estimating the number of hyperedges and the average degree.
 We gave $\Ot_\eps(\sqrt{n})$-query algorithms for both problems and proved matching $\Omega(\sqrt{n})$ lower bounds for constant factor approximations. Our results settle the worst-case query complexity dependence on $n$ up to logarithmic factors. Nevertheless, many interesting questions remain and we conclude the paper with several directions. Unless otherwise stated, the problem is to get $(1+\eps)$-approximations to $m$ in hypergraphs.

\begin{asparaitem}
	\item {\em Going between $n^{1/3}$ and $\sqrt{n}$.} As we mention in~\Cref{sec:pers}, 
	if the hypergraph is uniform, then one can obtain $\Ot_\eps(n^{1/3})$-query algorithms 
	using previous works~\cite{MPX07, BT24} as a black box. Our \Cref{thm:main} gives a $\Ot_\eps(\sqrt{n})$-query algorithm 
	for arbitrary hypergraphs. Can we interpolate in between, parameterizing by (non)uniformity? 
	
	Our examples in~\Cref{thm:main} establishing the $\Omega(\sqrt{n})$-query lower bound have $\eavg = \Theta(\sqrt{n})$.
	~\Cref{thm:main-bndmu} gives a $\Ot_\eps(\mu \cdot n^{1/3})$-query algorithm and one subroutine of our algorithm in~\Cref{thm:main}
	achieves an $\Ot_\eps(\mu + \sqrt{n})$-query algorithm. Can we achieve an $\Ot_\eps(\eavg + n^{1/3})$-query algorithm? 
	
	\item {\em Dependence on $\eps$.} The dependence on $1/\eps$ in~\Cref{thm:main-avgdeg} is linear but the dependence on $1/\eps$ in~\Cref{thm:main} is quadratic.
	Can the latter be made linear? Or is the quadratic dependence necessary for estimating number of hyperedges?
	
	\item {\em Power of ``pair'' queries.} A common query in sublinear {\em graph} algorithms
	is a pair query: given two vertices $u, v$, output whether the edge $(u,v)$ exists.
	Beretta, Chakrabarty, and Seshadhri~\cite{BCS26} showed that such queries (in the dual access model) lead
	to $\Ot_\eps(n^{1/4})$-query estimation algorithms. We could imagine generalizing pair queries to hypergraphs,
	wherein the query could be a pair of vertices, and the output could be whether there exists a hyperedge containing the pair,
	or even something stronger, like returning a random hyperedge containing that pair. In that case, our
	lower bounds do not hold. Can we beat the bounds of \Thm{main} and \Thm{main-avgdeg}
	with such queries?
	
	\item {\em Connection with IS queries.} Adar, Hotam, and Levi~\cite{AdHoLe26} recently showed that a hybrid model
	of IS queries with the standard model gives significant improvements for the query complexity
	of edge estimation in {\em graphs}. Could we prove similar results for arbitrary hypergraphs?
	
	\item {\em The degeneracy connection.}  Eden, Ron, and Seshadhri~\cite{ERS19} first observed that the graph
	degeneracy is intimately related to edge estimation. Eden, Ron, and Rosenbaum~\cite{EdRoRo19} made
	this connection even stronger, and Chanda~\cite{Ch26} recently proved the degeneracy bound holds
	for a variety of models. A theory of hypergraph degeneracy has been introduced by Paul-Pena
	and Seshadhri~\cite{PaSe26}. We believe that the study of hyperedge estimation parameterized by
	degeneracy could be a deep research direction.
\end{asparaitem}

\bibliographystyle{plain}
\bibliography{sublinear-edge}

\end{document}

%% file: preamble.tex
\usepackage{a4,geometry}
\usepackage{graphicx}
\usepackage{amsmath,amssymb,amsthm,mathtools}
\usepackage{paralist}
\usepackage{bm}
\usepackage{xspace}
\usepackage{url}
\usepackage{fullpage, prettyref}
\usepackage{boxedminipage}
\usepackage{wrapfig}
\usepackage{ifthen}
\usepackage{mdframed}
\usepackage{color,xcolor}
\usepackage{transparent}
\usepackage{enumitem}
\usepackage{varwidth}
\usepackage{framed}
\usepackage{cancel}

\usepackage{fullpage}
\usepackage{ifthen}
\usepackage{algorithm}
\usepackage[noend]{algpseudocode}
\algrenewcomment[1]{\(\triangleright\) {\small{\emph{\color{blue} #1}}}}
\algnewcommand{\LineComment}[1]{\State \(\triangleright\) \emph{\color{blue} #1}}
\algnewcommand{\Invariant}[1]{\State \(\triangleright\) \emph{\color{red} #1}}
\usepackage[pagebackref,letterpaper=true,colorlinks=true,pdfpagemode=none,urlcolor=blue,linkcolor=blue,citecolor=violet,pdfstartview=FitH]{hyperref}
\usepackage[nameinlink]{cleveref}

\newtheorem{theorem}{Theorem}[section]

\newtheorem{lemma}[theorem]{Lemma}
\newtheorem{claim}[theorem]{Claim}

\newcommand{\ignore}[1]{}

\newcommand{\eq}{\leftarrow}

\newcommand{\cA}{\mathcal{A}}

\newcommand{\cD}{\mathcal{D}}

\newcommand{\calD}{\mathcal{D}}
\newcommand{\calE}{\mathcal{E}}
\newcommand{\calF}{\mathcal{F}}

\newcommand{\RR}{\mathbb{R}}

\newcommand{\h}{{\mathrm{H}\kern1pt}}
\newcommand{\I}{{\mathrm{I}\kern1pt}}

\newcommand{\poly}{\mathrm{poly}}

\newcommand{\ceil}[1]{{\left\lceil{#1}\right\rceil}}

\DeclareMathOperator*{\Var}{\ensuremath{{\mathbf{Var}}}}
\DeclareMathOperator*{\Exp}{\ensuremath{{\mathbf{Exp}}}}
\DeclareMathOperator*{\Prob}{\ensuremath{\mathbf{Pr}}}
\renewcommand{\Pr}{\Prob}
\newcommand{\EX}{\Exp}

\newcommand{\eps}{\ensuremath{\varepsilon}}

\newcommand{\Ot}{\ensuremath{\widetilde{O}}}

\newcommand{\Sec}[1]{\hyperref[sec:#1]{\S\ref*{sec:#1}}} 
\newcommand{\Eqn}[1]{\hyperref[eq:#1]{(\ref*{eq:#1})}} 
\newcommand{\Fig}[1]{\hyperref[fig:#1]{Fig.\,\ref*{fig:#1}}} 
\newcommand{\Tab}[1]{\hyperref[tab:#1]{Tab.\,\ref*{tab:#1}}} 
\newcommand{\Thm}[1]{\hyperref[thm:#1]{Theorem\,\ref*{thm:#1}}} 
\newcommand{\Fact}[1]{\hyperref[fact:#1]{Fact\,\ref*{fact:#1}}} 
\newcommand{\Lem}[1]{\hyperref[lem:#1]{Lemma\,\ref*{lem:#1}}} 
\newcommand{\Prop}[1]{\hyperref[prop:#1]{Prop.~\ref*{prop:#1}}} 
\newcommand{\Cor}[1]{\hyperref[cor:#1]{Corollary~\ref*{cor:#1}}} 
\newcommand{\Conj}[1]{\hyperref[conj:#1]{Conjecture~\ref*{conj:#1}}} 
\newcommand{\Def}[1]{\hyperref[def:#1]{Definition~\ref*{def:#1}}} 
\newcommand{\Alg}[1]{\hyperref[alg:#1]{Alg.~\ref*{alg:#1}}} 
\newcommand{\Clm}[1]{\hyperref[clm:#1]{Claim~\ref*{clm:#1}}} 

\newcommand{\supp}{\mathsf{supp}}

\colorlet{shadecolor}{blue!10}

